\newcommand{\blind}{1}
\newtheorem{theorem}{Theorem}
\newtheorem{preposition}{Proposition}
\newtheorem{definition}{Definition}
\date{}
\begin{document}

\bibliographystyle{natbib}

\def\spacingset#1{\renewcommand{\baselinestretch}%
{#1}\small\normalsize} \spacingset{1}


\if1\blind
{
  \title{\bf On initial direction, orientation and discreteness in the analysis of circular variables}
    \author{Gianluca Mastrantonio \hspace{.2cm}\\
    Department of Economics, University of Roma Tre\\
    and \\
    Giovanna Jona Lasinio \\
    Department of Statistical Sciences, University of Rome ``Sapienza''\\
    and\\
    Antonello Maruotti\\
    Department of Economic, Political Sciences and Modern Languages, LUMSA\\
    and \\ Gianfranco Calise\\
    Department of Earth Science, University of Rome ``Sapienza''
    }
  \maketitle
 \fi
}
\if0\blind
{
  \bigskip
  \bigskip
  \bigskip
  \begin{center}
    {\LARGE\bf Title}
\end{center}
  \medskip
} \fi

\bigskip
\begin{abstract}
In this paper, we propose a discrete circular distribution obtained by extending the wrapped Poisson distribution.  This new distribution, the Invariant Wrapped Poisson (IWP), enjoys numerous advantages:  simple tractable  density,  parameter-parsimony  and  interpretability,  good  circular dependence structure and easy random number generation thanks to known marginal/conditional distributions. Existing discrete circular distributions strongly depend on the initial direction and orientation, i.e. a change of the reference system on the circle may lead to misleading inferential results.
We investigate the invariance properties, i.e. invariance under change of initial direction and of the reference system orientation, for several continuous and discrete distributions. We prove that the introduced IWP distribution satisfies these two crucial properties. We estimate parameters in a Bayesian framework and provide all computational details to implement the algorithm. Inferential issues related to the invariance properties are discussed through numerical examples on artificial and real data. \end{abstract}

\noindent%
{\it Keywords:} Circulara data, Bayesian inference, Wrapped Poisson distribution
\vfill

\newpage

\spacingset{1.45} 


\section{Introduction}
Circular data (for a review see e.g. \citealp{lee2010}) arise naturally in many scientific fields where observations are recorded as directions or angles. Examples of circular data include wind and wave directions \citep{Bulla2012,Lagona2014,wang2014,Lagona2014,lagona2015,mastrantonio2015,mastrantonio2015b}, animal movements
\citep{Eckert2008,Langrock2012,langrock2014b,McLellan2015}, social science 
\citep{Gill2010}  and  auditory localization data  \citep{McMillan2013}. Standard statistical distributions cannot be used to analyze circular data because of the finite and on the unit circle $[0,2\pi)$ support and to dependence of descriptive and inferential analyses on the initial value and orientation on the unit circle. These features make circular data special, so that ad-hoc method and distributions have been developed in the literature.

Dating back to \cite{vonmises1918}, the attention over circular data has increased over time \citep{mardia72,fisher1996,Merdia1999,Jammalamadaka2001,pewsey2013}, leading to important probability distributions theory and inferential results. Nevertheless, not all the introduced circular distributions hold/satisfy crucial properties required to avoid misleading inference, as their parameters strongly depend on the chosen initial direction and sense of orientation (clockwise or anti-clockwise) on the unit circle. Indeed, in general, any circular distribution should have good fitting characteristics, a tractable form, be parsimonious in terms of parameters at play and invariant with respect to the chosen reference system.

All these conditions are fulfilled by the new distribution we propose in the present paper. Motivated by real-data examples, we introduce a new discrete circular distribution, the Invariant Wrapped Poisson (IWP), and investigate its properties along with a discussion on the interpretation of parameters. The literature on discrete circular data modelling has been often overlooked as discrete circular distributions have several drawbacks, as we will discuss later in this paper. However, the increased availability of discrete data on the circle has recently led to a series of researches introducing distributions able to account for this specific data feature \citep{Girija2014,Jayakumar2012}. We would contribute to extend this branch of literature and provide a reference distribution for the analysis of discrete circular data. The Poisson distribution is recognized as an important tool to analyze count data, and its correspondent on the circle, the Wrapped Poisson distribution, is often considered for count data analysis on the circle. Nevertheless, as the Poisson distribution is restrictive because it assumes a unit, variance-to-mean ratio 
the Wrapped Poisson \citep{Merdia1999} shows a strict functional relation between circular mean and variance, and strongly depends on the reference system orientation adopted. To avoid these issues, and recognizing the importance of relaying on a Poisson-based distribution, we introduce the IWP which is invariant with respect to the initial origin and the reference system {and, furthermore, allows for relaxing the strong imposed linkage between circular mean and variance. Indeed, by deriving IWP's trigonometric moments, it is possible to show that  the circular mean and variance are independent.}

  { 	
We  estimate the model under a Bayesian framework proposing two Markov chain Monte Carlo (MCMC) algorithms. An exact algorithm is considered and, to increase efficiency, an approximated algorithm is introduced based on IWP density approximation. The  approximation is obtained investigating the link between  
the wrapped Normal  and the IWP densities and following the  truncation strategy proposed by \cite{Jona2013} for the wrapped Normal distribution. 
Both algorithms use the well known latent variable approach for wrapped distribution  of  \cite{Coles1998} and  a specific prior   distribution for one  of the parameters. The approximated algorithm requires  a re-parametrization of the latent variable to ensure computational efficiency and avoid local optima. }

The invariance properties in circular distributions are formally introduced in Section 2, along with the notation used throughout the paper. Proper definitions of invariance under changes of initial direction and reference system orientation are provided along with necessary and sufficient conditions that circular distributions must hold to be invariant. Section 3 is devoted to the discussion of these properties in widely used (continuous and discrete) circular distribution. In the same section we shall show that all the known discrete distributions do not hold the two invariance properties. The IWP is introduced in Section 4. Its good properties are widely discussed and all computational details provided. Model inference in a Bayesian setting is provided in Section 5. The good properties of the IWP distribution will be studied on artificial and real data from different empirical contexts, and compared with  well-known models from the literature in Section 6. Section 7 summarized obtained inferential results and provides a discussion of possible applications and extensions of the IWP distribution.

\section{ {Invariance in circular distributions}} \label{sec:icsico}

Let $\{ \mathbb{S},\mathcal{A},P \}$ be a probability space, where the sample space $\mathbb{S}= \{(x,y): x^2+y^2=1  \}$ is the unit circle,  $\mathcal{A}$ is the $\sigma-$algebra on $\mathbb{S}$ and $P:\mathbb{S} \rightarrow [0,1] $ is the normalized Lebesgue measure on the measurable space $\{ \mathbb{S},\mathcal{A}\}$.  
Let $\mathbb{D}$ be a subset of $ \mathbb{R}$ such that its length is $2 \pi$, for example if   $\mathbb{D} =[a,b)$ we have $b-a = 2 \pi$. 

 {Let us c}onsider the 
measurable function $\Theta:\mathbb{S} \rightarrow \mathbb{D}$, with 
$
\Theta^{-1}(d)= (x,y)=(\cos d,\sin d ),\, d \in \mathbb{D},
$
 {and} let $\mathcal{D}=\sigma(\mathbb{D})$)  be the sigma algebra of $\mathbb{D}$ induced by $\Theta$ and   $A_{\Theta,D} \equiv \{(x,y):\Theta(x,y)\in D\}$ and 
$
\mathbb{P}_{\Theta}  (D) = P(\Theta^{-1}(D)) = P\left( A_{\Theta,D} \right),\, \forall D \in \mathcal{D}
$.  Then the measurable space induced by $\Theta$ is $(\mathbb{D},\mathcal{D},\mathbb{P}_{\Theta})$  with 
\begin{enumerate}
	\item $\mathbb{P}_{\Theta}  (D) =  P\left( A_{\Theta,D} \right)\geq 0,\, \forall D \in \mathcal{D}$;
	\item $\mathbb{P}_{\Theta}  (\mathbb{D}) =  P\left( A_{\Theta,\mathbb{D}} \right)= 1$;
	\item   for any countable sequence of disjoint  sets $\{D_j\}_{j=1}^{\infty}$  of $\mathcal{D}$,
	$
	\mathbb{P}_{\Theta}\left( \cup_{j=1}^{\infty}D_j  \right) = P\left(A_{\Theta,\cup_{j=1}^{\infty}D_j}  \right) = P \left(\cup_{j=1}^{\infty}  A_{\Theta,D_j}\right)= \sum_{j=1}^{\infty}P\left( A_{\Theta,D_j} \right) = \sum_{j=1}^{\infty} \mathbb{P}_{\Theta}(D_j ),
	$
\end{enumerate}
\noindent i.e.  {$(\mathbb{D},\mathcal{D},\mathbb{P}_{\Theta})$ is a probability space}
It follows that  $\Theta$ is a random variable and  $\mathbb{P}_{\Theta}$ is  its  \emph{probability distribution}.  $\Theta$ represents  an  angle over the unit circle and it is called a \emph{circular random variable}.  
 {Accordingly,} for all $d\in \mathbb{D}$, $\Theta^{-1}(d)=\Theta^{-1}(d \mod 2\pi)$ and, without loss of generality, we can represent any circular variable in $[0,2\pi)$. $\mathbb{D}$ can be either continuous or  discrete. In the latter case,  we assume that it is composed of  $l$ distinct points  equally spaced  between two extremes again denoted  $a$ and $b$ and we write  $\mathbb{D} \equiv \{ a+2\pi j/l   \}_{j=0}^{l-1}$.\\
 {If} $\mathbb{D}$ is a continuous domain,  $\Theta $ is a continuous  circular variable and $\mathbb{P}_{\Theta}$ is the Lebesgue measure;  {on the other hand, if} $\mathbb{D}$ is discrete, $\Theta$ is a discrete  circular variable  or a \emph{lattice} variable \citep[see][]{Merdia1999}, and $\mathbb{P}_{\Theta}$ is the counting measure. 
In both cases we indicate with  $f_{\Theta}= d\mathbb{P}_{\Theta}/d P_{\Theta}:\mathbb{D}\rightarrow \mathbb{R}^+$  the Radon-Nicodym derivative  of $\mathbb{P}_{\Theta}$ (rnd), i.e.  
$
\mathbb{P}_{\Theta}(D)= \int_{D}f_{\Theta}d{P}_{\Theta}.
$
Hence, we denote by $f_{\Theta}$ the probability density function (pdf) of $\Theta$ including both situations, continuous and discrete.

In the representation of circular variables a key role is played by the the initial direction (the angle 0) and the sense of orientation (clockwise or anti-clockwise) of the domain.  {Both} are uniquely determined  by the  choice of the orthogonal reference system  on  the plane.  {Any s}tatistical tool for circular variables should be invariant  {with respect to} different choices of th{e reference orientation} system to avoid conflicting or  {misleading} conclusions.  {Accordingly}, $f_{\Theta}(\cdot| \boldsymbol{\psi})$ must be invariant for changes in the orientation of the system's axis and initial direction. 
 {The following Definitions formalize the invariance properties with respect to the reference system on the plane.}
%
%
%
%
%
%
%
%
%
%
%
%

\begin{figure}[t!]
	\centering
	\emph{Wrapped skew normal}\\
	{\subfloat[]{\includegraphics[trim=80 55 55 55,clip,scale=0.3]{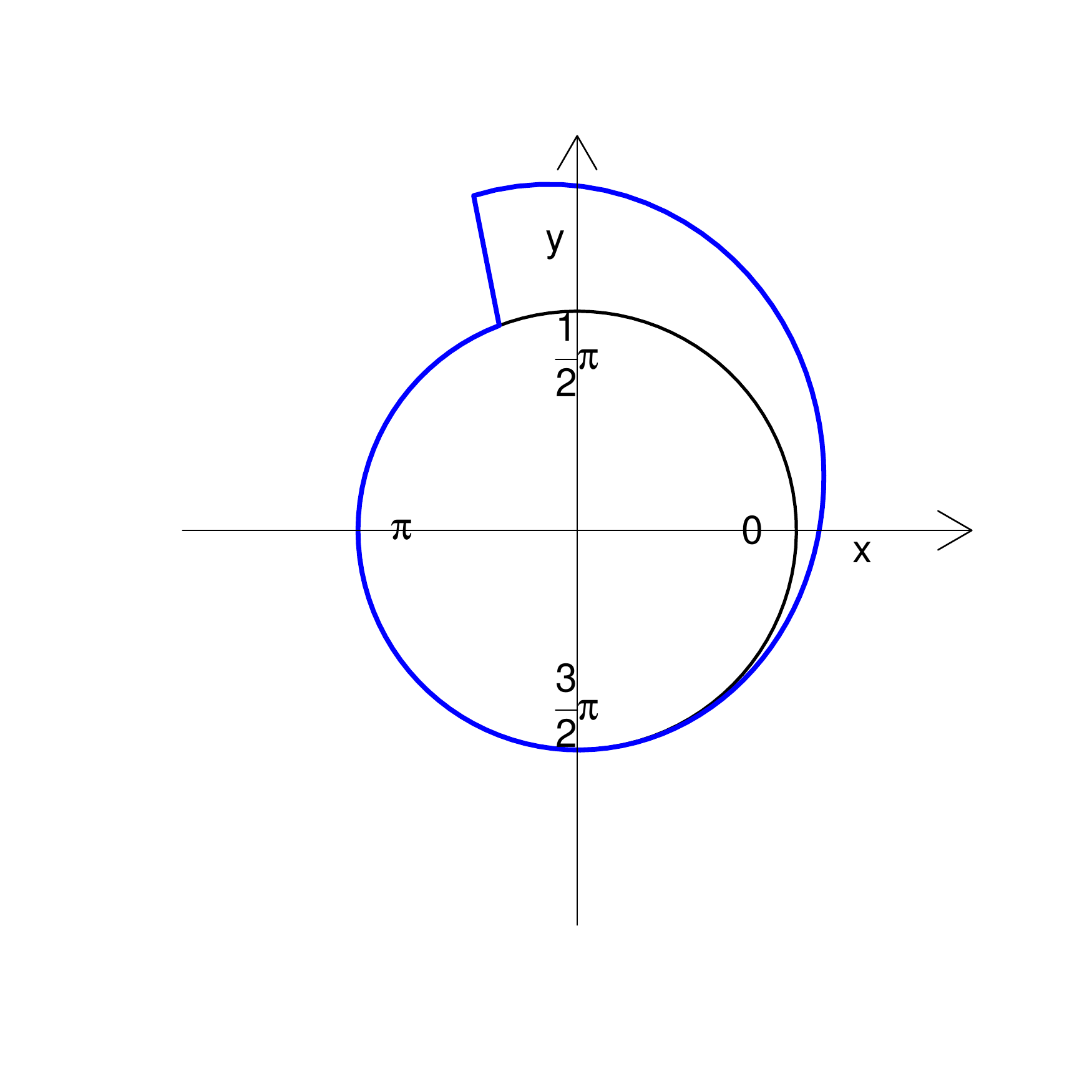}}}
	{\subfloat[]{\includegraphics[trim=80 55 55 55,clip,scale=0.3]{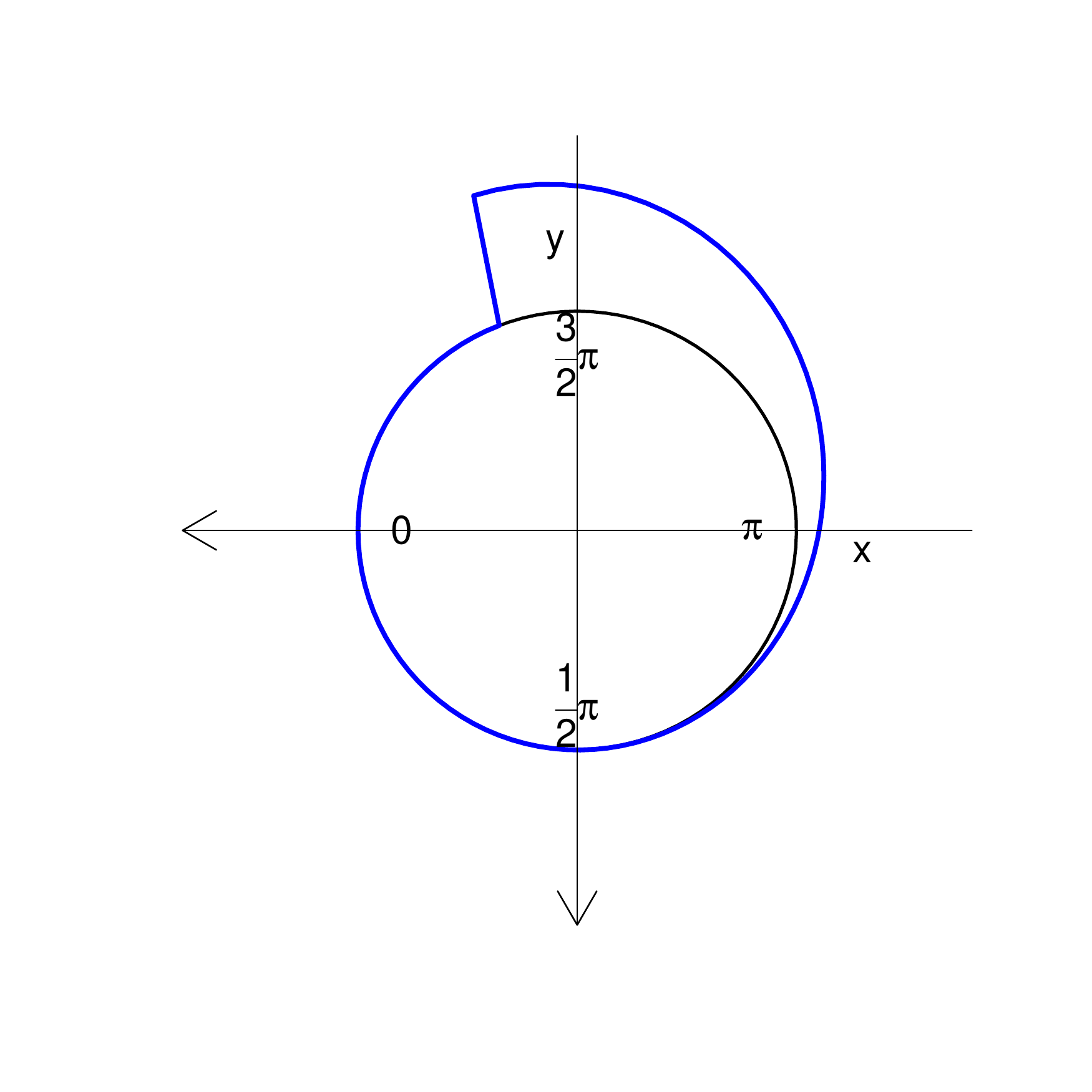}}}
	{\subfloat[]{\includegraphics[trim=80 55 55 55,clip,scale=0.3]{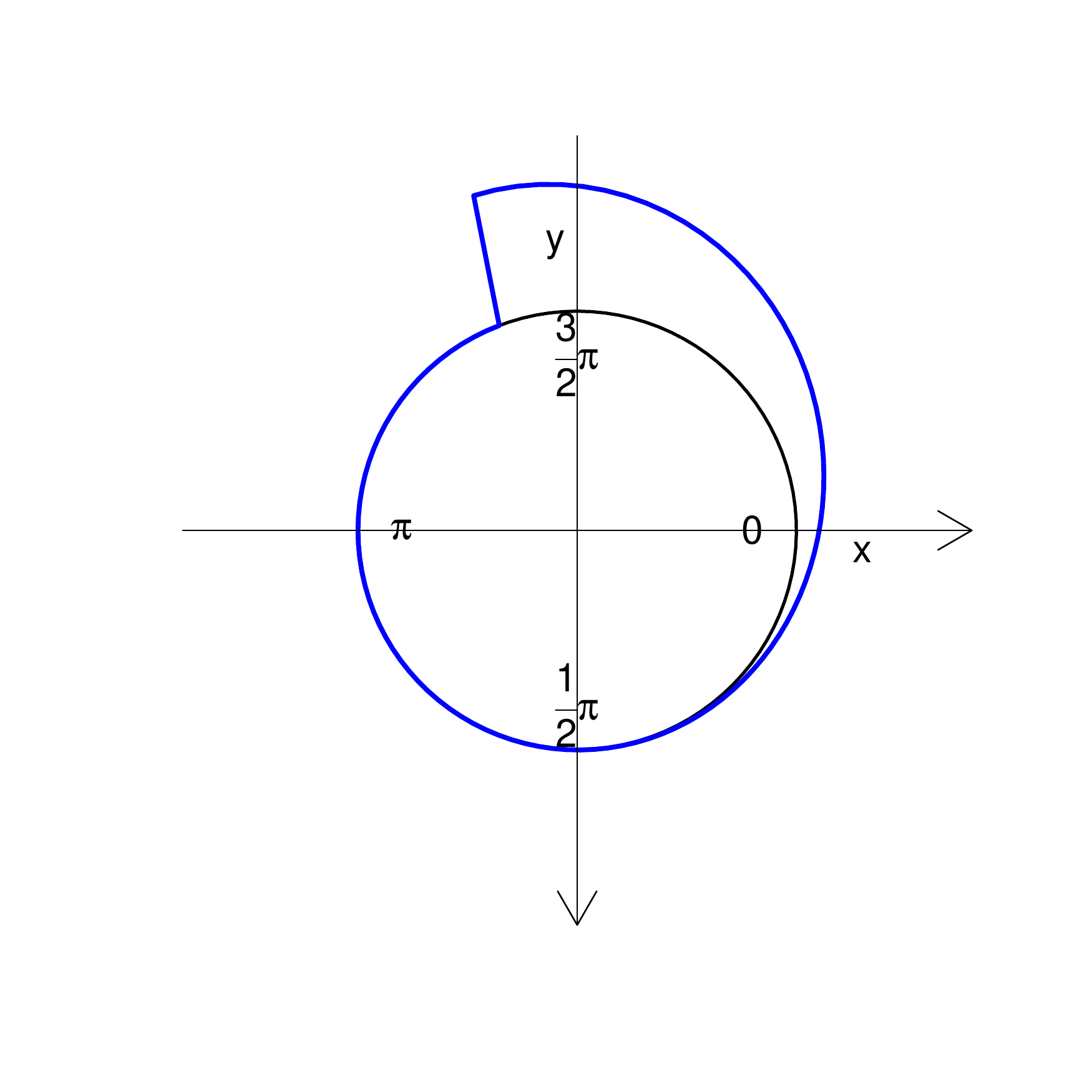}}}\\
	\emph{Wrapped exponential}\\
	{\subfloat[]{\includegraphics[trim=80 55 55 55,clip,scale=0.3]{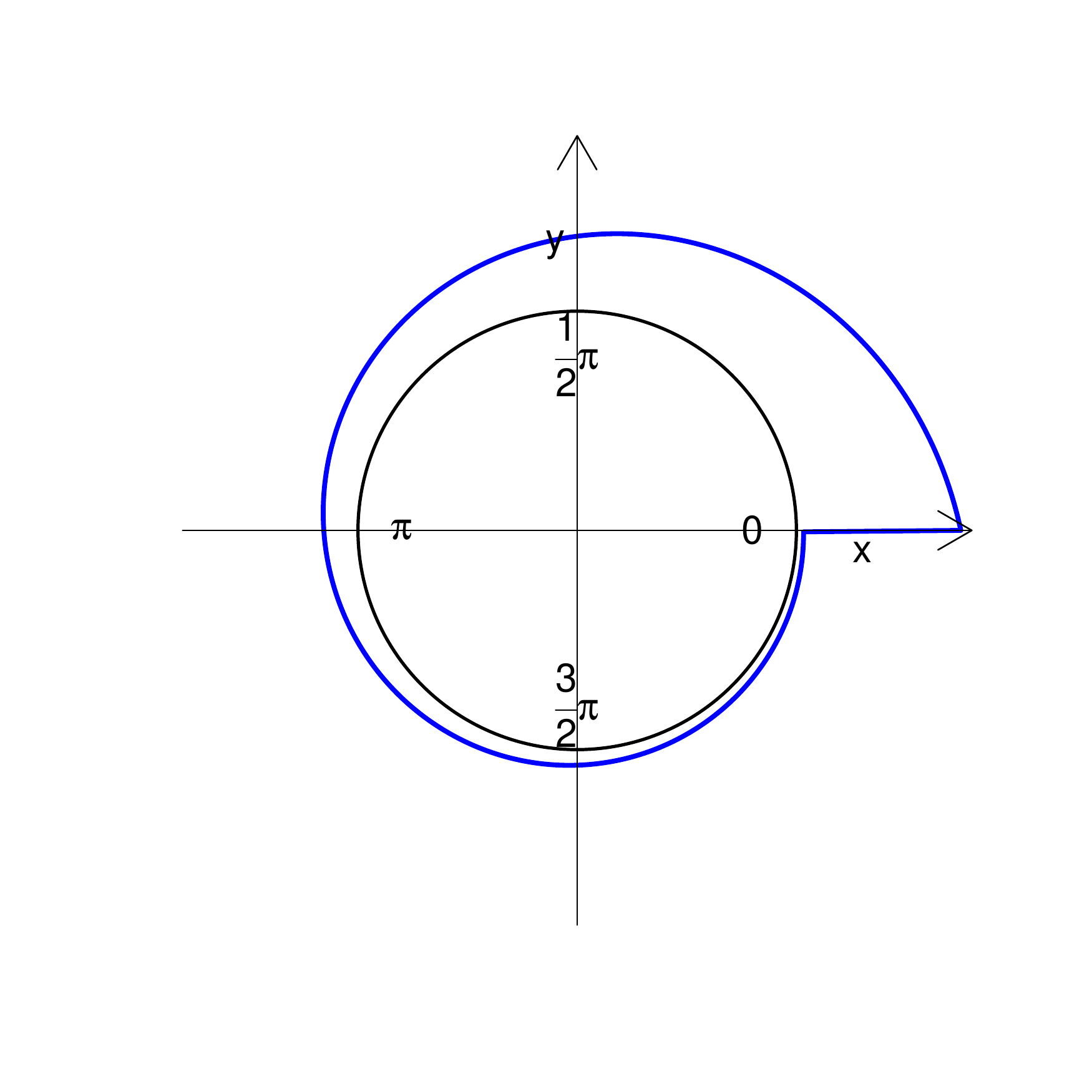}}}
	{\subfloat[]{\includegraphics[trim=80 55 55 55,clip,scale=0.3]{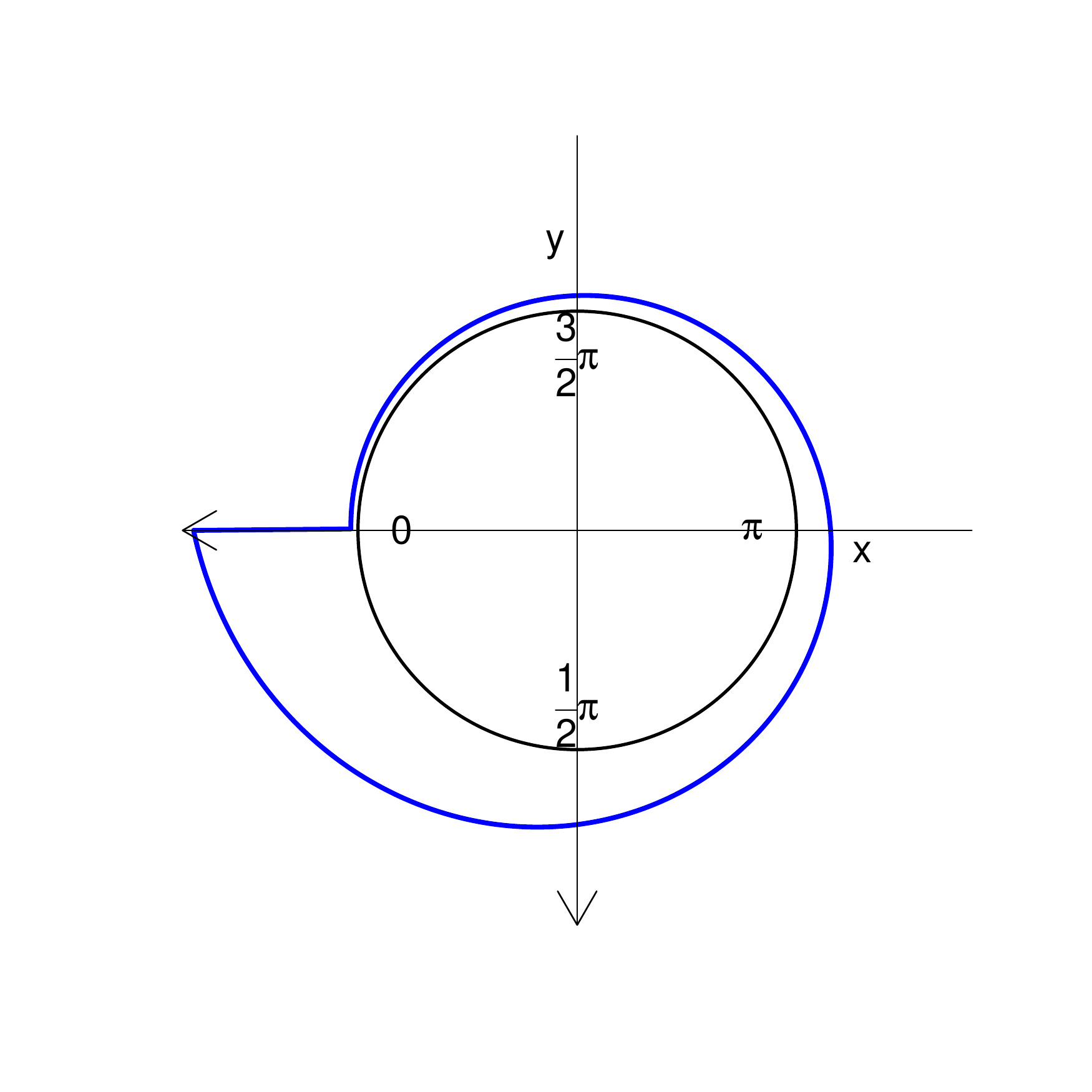}}}
	{\subfloat[]{\includegraphics[trim=80 55 55 55,clip,scale=0.3]{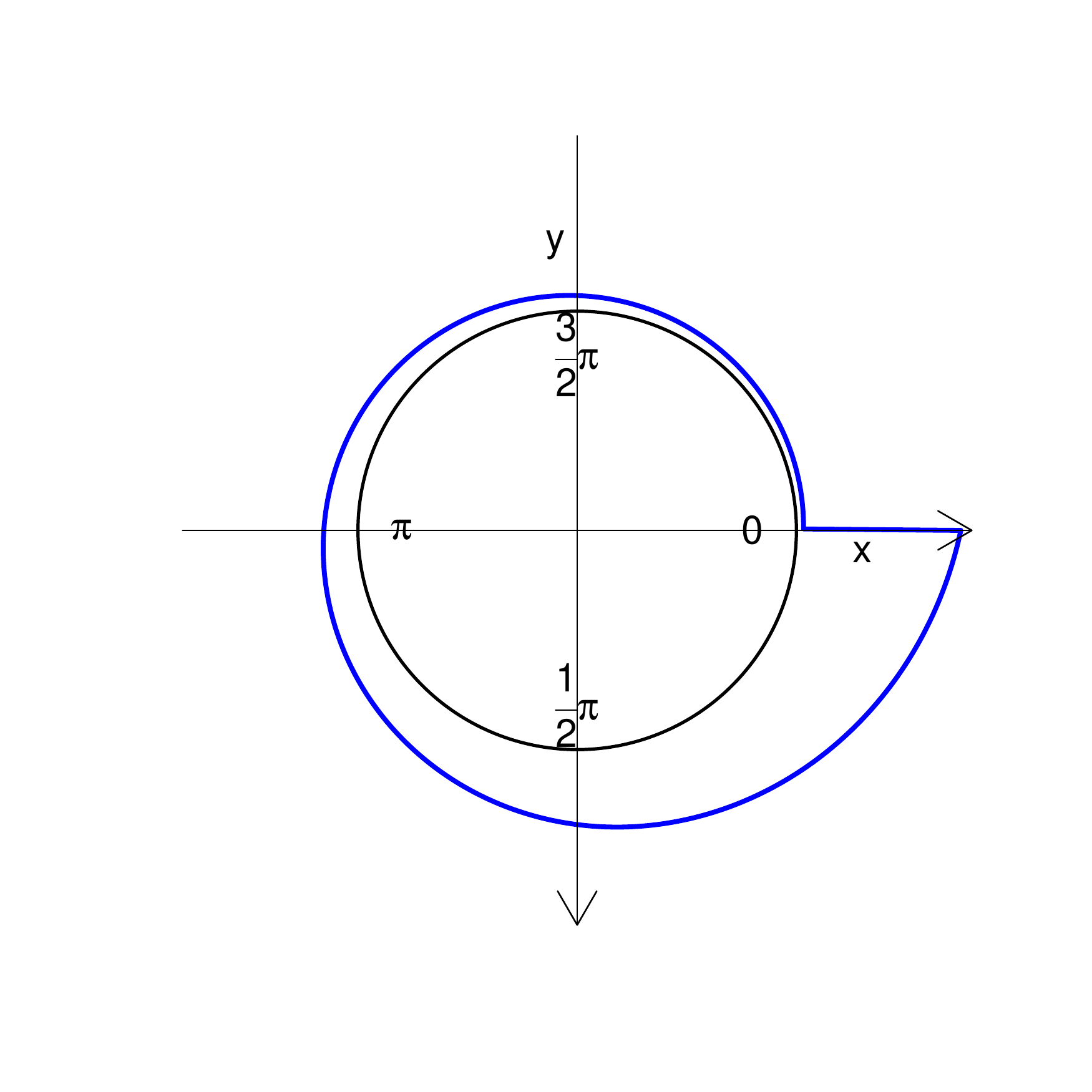}}}
	\caption{	Probability density functions of a wrapped skew normal (a-c) and a wrapped exponentil (d-f) under different initial directions and orientations. The arrows indicate the axis orientation } \label{fig:dens}
\end{figure}

\begin{definition}\label{def:def1}
	We say that $f_{\Theta}$ is invariant under change of initial direction  (ICID) if $\forall \xi \in \mathbb{D}$, $\forall \theta \in \mathbb{D}$ and $\forall \boldsymbol{\psi} \in \boldsymbol{\Psi}$ there exists ${\boldsymbol{\psi}^*}\in \boldsymbol{\Psi}$  such that
	$	f_{\Theta}(\theta| \boldsymbol{\psi} ) = f_{\Theta}(\theta- \xi	| {\boldsymbol{\psi}^*} ).
	$
\end{definition}	

\begin{definition}\label{def:def2}
	We say that $f_{\theta}$ is invariant under changes of the reference system orientation  (ICO) if $\forall \xi \in \mathbb{D}$, $\forall \theta \in \mathbb{D}$ and $\forall \boldsymbol{\psi} \in \boldsymbol{\Psi}$ there exists $ {\boldsymbol{\psi}^*}\in \boldsymbol{\Psi}$  such that :
	$	f_{\Theta}(\xi-\theta| \boldsymbol{\psi} ) = f_{\Theta}(\xi+\theta| {\boldsymbol{\psi}^*}).
	$
\end{definition}
 {A simple example clarifies the two definition above. I}n Figure \ref{fig:dens} (a), a  wrapped skew normal density \citep{Pewsey2000},  {is plotted on} a sample of circular measurements. Most of the probability mass  is in the first quadrant of the current reference system.  The origin (initial direction) is set to zero radiant and the  orientation is anti-clockwise. Figure \ref{fig:dens} (b) is obtained by adding $\pi$ to Figure \ref{fig:dens} (a), i.e. changing the system origin, and Figure \ref{fig:dens} (c) is built as $-1\times$Figure \ref{fig:dens} (a) i.e. changing the system orientation. The densities in Figures \ref{fig:dens} (b) and (c) are still skew normal.  {The distribution in} \ref{fig:dens} is ICID and ICO.  {We would remark that a necessary, but not sufficient, condition to be ICID and ICO is that distributions under} different reference systems must have the same circular variance \citep{Merdia1999}, as for example shown in the three densities in Figures \ref{fig:dens} (a), (b) and  (c).  {To further corroborate this result, let us provide another example.} In figure \ref{fig:dens} (d) a wrapped exponential distribution \citep{Rao2004} is shown. 
 {Let us transform circular measurements as before, in the wrapped skew normal examples. Although transformed densities in Figures \ref{fig:dens} (e) and (f) have the same variance as Figures \ref{fig:dens} (d), both look different from \ref{fig:dens} (d) at first glance. Indeed, the wrapped exponential is not ICID and ICO, as we will discuss in depth in the next section.}

 {It is clear that a graphical inspection could reveal some indications, but we need to formally prove whether a circular distribution is ICID and ICO or not.} Theorem \ref{theo:primo} helps us to define distributions that are both ICID and ICO.
\begin{theorem} \label{theo:primo}
	Let $f_{\theta}(\cdot | \boldsymbol{\psi})$ be the pdf of the circular variable $\Theta\in \mathbb{D}$  with $\boldsymbol{\psi}\in\Psi$  and let $\Theta^*=g(\theta)=\delta(\theta+\xi)$ with scalar $\delta = \{-1,1\}$ , $\xi\in \mathbb{D}$ and pdf  $f_{\theta^*}(\cdot | \boldsymbol{\psi}^*)$ and $\boldsymbol{\psi}^*\in\Psi$. $f_{\theta}$ is ICID and ICO iff  
	\begin{equation} \label{eq:theo1}
		f_{\Theta^*}(\theta^*| \boldsymbol{\psi}^*) = f_{\Theta}(\theta^* | \boldsymbol{\psi}^*).
	\end{equation}
i.e. $f_{\Theta}$ and $f_{\Theta^*}$ belong to the same parametric family.	 
\end{theorem}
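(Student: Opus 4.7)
The plan is to prove the two implications of the ``iff'' separately, and both directions rest on a single change-of-variables identity. Since $g(\theta)=\delta(\theta+\xi)$ is a bijection on $\mathbb{D}$ with unit Jacobian in the continuous case (and a relabelling of the $l$ lattice points in the discrete case), the pdf of $\Theta^{\ast}$ is
\[
f_{\Theta^{\ast}}(\theta^{\ast}\mid\boldsymbol{\psi})\;=\;f_{\Theta}(\delta\theta^{\ast}-\xi\mid\boldsymbol{\psi}),
\]
and everything else is symbolic manipulation of this formula together with the two invariance definitions.

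For the sufficiency direction (i.e.\ \eqref{eq:theo1} implies ICID and ICO), I specialise the closure assumption in two ways. Choosing $\delta=1$ and shift $-\xi$, the hypothesis supplies $\boldsymbol{\psi}^{\ast}\in\Psi$ with $f_{\Theta}(\theta^{\ast}+\xi\mid\boldsymbol{\psi})=f_{\Theta}(\theta^{\ast}\mid\boldsymbol{\psi}^{\ast})$; the substitution $\theta^{\ast}=\theta-\xi$ rearranges this into the ICID equation of Definition~\ref{def:def1}. Choosing $\delta=-1$ and shift $-2\xi$, the hypothesis supplies $\boldsymbol{\psi}^{\ast}$ with $f_{\Theta}(-\theta^{\ast}+2\xi\mid\boldsymbol{\psi})=f_{\Theta}(\theta^{\ast}\mid\boldsymbol{\psi}^{\ast})$; the substitution $\theta^{\ast}=\xi+\theta$ rearranges this into $f_{\Theta}(\xi-\theta\mid\boldsymbol{\psi})=f_{\Theta}(\xi+\theta\mid\boldsymbol{\psi}^{\ast})$, which is the ICO equation of Definition~\ref{def:def2}.

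For the necessity direction (ICID and ICO imply \eqref{eq:theo1}), I analyse the two values of $\delta$ separately. When $\delta=1$, the change-of-variables identity gives $f_{\Theta^{\ast}}(\theta^{\ast}\mid\boldsymbol{\psi})=f_{\Theta}(\theta^{\ast}-\xi\mid\boldsymbol{\psi})$, and ICID applied with shift $-\xi$ to parameter $\boldsymbol{\psi}$ rewrites the right-hand side as $f_{\Theta}(\theta^{\ast}\mid\boldsymbol{\psi}^{\ast})$ for some $\boldsymbol{\psi}^{\ast}\in\Psi$. When $\delta=-1$, I proceed in two steps: ICO with reflection point $-\xi$ turns $f_{\Theta}(-\theta^{\ast}-\xi\mid\boldsymbol{\psi})$ into $f_{\Theta}(\theta^{\ast}-\xi\mid\boldsymbol{\psi}')$ for some $\boldsymbol{\psi}'\in\Psi$, and then ICID (as in the $\delta=1$ case but applied to $\boldsymbol{\psi}'$) converts this into $f_{\Theta}(\theta^{\ast}\mid\boldsymbol{\psi}^{\ast})$. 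Composing the two parameter changes yields the $\boldsymbol{\psi}^{\ast}$ demanded by \eqref{eq:theo1}.

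The main obstacle is bookkeeping rather than depth: one must track the direction of the various substitutions carefully (ICID uses $\theta-\xi$, ICO uses $\xi\pm\theta$), pick auxiliary shifts such as $-\xi$ and $-2\xi$ so that the algebra collapses to the required forms, and check that those shifts remain in $\mathbb{D}$ modulo $2\pi$ (automatic in the continuous case and in the canonical discrete grid $\{2\pi j/l\}_{j=0}^{l-1}$). Because $g$ is bijective with unit Jacobian in both settings, no separate treatment of continuous and discrete variables is required.
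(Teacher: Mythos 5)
Your proposal is correct and follows essentially the same route as the paper: both directions rest on the change-of-variables identity $f_{\Theta^{*}}(\theta^{*}\mid\boldsymbol{\psi}^{*})=f_{\Theta}(\delta\theta^{*}-\xi\mid\boldsymbol{\psi})$ followed by a case analysis on $\delta$ and a choice of auxiliary shift. The only differences are tactical and in your favour: you obtain ICO in one step by taking the shift $-2\xi$ with $\delta=-1$ (where the paper first proves the $\xi=0$ case and then bootstraps via ICID), and in the necessity direction you make explicit the composition ``ICO with reflection point $-\xi$, then ICID'' that the paper compresses into the single asserted identity $f_{\Theta}(\delta\theta^{*}-\xi\mid\boldsymbol{\psi})=f_{\Theta}(\theta^{*}\mid\boldsymbol{\psi}^{**})$.
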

\begin{proof} 
	Let us focus on $f_{\Theta^*}$.  Notice that $g(\cdot)$ is a  linear mapping  such that  $\mathbb{D}\rightarrow \mathbb{D}$ and by  rule of variable transformation we have  that
	\begin{equation} \label{eq:2}
	f_{\Theta^*}(\theta^*| \boldsymbol{\psi}^*)=f_{\Theta}(\delta\theta^*-\xi| \boldsymbol{\psi}),
	\end{equation}
	where the vector of parameters $\boldsymbol{\psi}^*$ is a function of $\boldsymbol{\psi}$ and $(\delta,\xi)$.\\	
Now we prove that if 
	$f_{\Theta}$ and $f_{\Theta^*}$ belong to the same parametric family this  implies that $f_{\Theta}$ is ICID and ICO. 

	\begin{enumerate}
		\item {\bf If $f_{\Theta^*}(\theta^*| \boldsymbol{\psi}^*) = f_{\Theta}(\theta^* | \boldsymbol{\psi}^*)$ then \eqref{eq:theo1} implies
		\begin{equation} \label{eq:22}
		f_{\Theta}(\theta^* | \boldsymbol{\psi}^*)=f_{\Theta}(\delta\theta^*-\xi| \boldsymbol{\psi}),
		\end{equation} }
		Equation \eqref{eq:2} is true for all $\theta^* \in \mathbb{D}$, $\xi \in \mathbb{D}$, $\delta \in \{-1,1\}$ and moreover must be that $\boldsymbol{\psi}^* \in \boldsymbol{\Psi}$ since the left and right sides of \eqref{eq:22} belong to the same parametric family.
		To prove that $f_{\Theta}$ is ICID it is sufficient to set $\delta=1$ in \eqref{eq:22} and we  obtain  Definition \ref{def:def1}.  
		 For simplicity let in \eqref{eq:22}  $\xi=0$ and $\delta =-1$. Then we  immediately obtain   Definition \ref{def:def2}:
		$
		f_{\Theta}(\theta^* | \boldsymbol{\psi}^*)=f_{\Theta}(-\theta^*| \boldsymbol{\psi}),
		$
		As $f_{\Theta}$ is ICID what holds in $\xi=0$ is going to hold for all $\xi\in \mathbb{D}$ then:
		
		\begin{equation}  \label{eq:ICO}
		f_{\Theta}(\theta^* | \boldsymbol{\psi}^*)=f_{\Theta}(\xi^*+\theta^* | \boldsymbol{\psi}^{**})=f_{\Theta}(\xi^*-\theta^*| \boldsymbol{\psi}^{***})=f_{\Theta}(-\theta^*| \boldsymbol{\psi}).
		\end{equation}
		The central part of \eqref{eq:ICO} is  Definition \ref{def:def2} and  proves  that $f_{\Theta}$ is also ICO.
		\item\textbf{ If $f_{\Theta}$ is ICID and ICO then
				$f_{\Theta^*}(\theta^*| \boldsymbol{\psi}^*) = f_{\Theta}(\theta^*| \boldsymbol{\psi}^*)$.
			}	\\
		Since $f_{\Theta}$ is  	ICID and ICO we can write $f_{\Theta}(\delta\theta^*-\xi| \boldsymbol{\psi}) = f_{\Theta}(\theta^*| \boldsymbol{\psi}^{**})$ with   $ \boldsymbol{\psi}^{**} \in \boldsymbol{\Psi}$. Using this equivalence with equation  \eqref{eq:2} we can write
		$
		f_{\Theta^*}(\theta^*| \boldsymbol{\psi}^{*}) = f_{\Theta}(\theta^*| \boldsymbol{\psi}^{**}).
		$
		Since  $f_{\Theta^*}$ and $f_{\Theta}$ assume the same value at the same point for all $\theta \in \mathbb{D}$   they must be the same function with  $\boldsymbol{\psi}^*\equiv \boldsymbol{\psi}^{**}$.

%
%
		\end{enumerate}	
		
	\end{proof}


 {It is always possible to transform any circular pdf so to obtain its  ICID and ICO version.}


\begin{preposition} \label{prep:1}
	If  $f_{\Theta}(\cdot|\boldsymbol{\psi})$ is not ICID and ICO,  we can define $f_{\Theta^*}(\cdot|\boldsymbol{\psi}^*)$, where $\Theta^*= \delta(\Theta+\xi)$ and $\boldsymbol{\psi}^*=\{\boldsymbol{\psi}, \delta,\xi\}$ such that $f_{\Theta^*}(\cdot|\boldsymbol{\psi}^*)$ is ICID and ICO.
\end{preposition}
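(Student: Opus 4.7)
My plan is to apply Theorem~\ref{theo:primo} directly to the augmented family $\{f_{\Theta^*}(\cdot\,|\,\boldsymbol{\psi}^*) : \boldsymbol{\psi}^* = (\boldsymbol{\psi},\delta,\xi) \in \boldsymbol{\Psi}\times\{-1,1\}\times \mathbb{D}\}$, by showing closure under the two transformations of Definitions~\ref{def:def1} and \ref{def:def2}. The starting point is the change-of-variable identity already derived in equation \eqref{eq:2}, which gives the explicit form
\begin{equation*}
f_{\Theta^*}(\theta^*\,|\,\boldsymbol{\psi},\delta,\xi) \;=\; f_{\Theta}(\delta\theta^* - \xi \,|\, \boldsymbol{\psi}),
\end{equation*}
valid for all $\theta^*\in\mathbb{D}$, since $g(\theta)=\delta(\theta+\xi)$ is a bijection $\mathbb{D}\to\mathbb{D}$ with unit Jacobian (in the continuous case) or a bijection between lattice points (in the discrete case).

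Next I would verify ICID. Fix an arbitrary shift $\eta\in\mathbb{D}$ and propose the new parameter vector $\boldsymbol{\psi}^{*\prime} = (\boldsymbol{\psi},\,\delta,\,\xi - \delta\eta\bmod 2\pi)$, which lies in the augmented parameter space. A direct substitution in the displayed expression above yields
\begin{equation*}
f_{\Theta^*}(\theta^* - \eta \,|\, \boldsymbol{\psi}^{*\prime}) \;=\; f_{\Theta}\bigl(\delta(\theta^* - \eta) - (\xi - \delta\eta) \,\big|\, \boldsymbol{\psi}\bigr) \;=\; f_{\Theta}(\delta\theta^* - \xi \,|\, \boldsymbol{\psi}) \;=\; f_{\Theta^*}(\theta^*\,|\,\boldsymbol{\psi}^*),
\end{equation*}
which matches Definition~\ref{def:def1}.

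For ICO, given $\eta\in\mathbb{D}$, I would propose $\boldsymbol{\psi}^{*\prime\prime} = (\boldsymbol{\psi},\,-\delta,\,\xi - 2\delta\eta\bmod 2\pi)$. Matching coefficients of $\theta^*$ and the constant term in $f_{\Theta}(\delta(\eta-\theta^*)-\xi\,|\,\boldsymbol{\psi})$ versus $f_{\Theta}(\delta^{\prime\prime}(\eta+\theta^*)-\xi^{\prime\prime}\,|\,\boldsymbol{\psi})$ forces $\delta^{\prime\prime}=-\delta$ and $\xi^{\prime\prime}=\xi-2\delta\eta$, and with these choices the two sides coincide pointwise in $\theta^*$, giving Definition~\ref{def:def2}. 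Note that both adjustments stay within $\{-1,1\}\times \mathbb{D}$ because $\mathbb{D}$ is closed under addition modulo $2\pi$ (and, in the lattice case, under translations by multiples of $2\pi/l$, so one should take $\eta\in \mathbb{D}$ consistently).

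There is no real obstacle here beyond bookkeeping: the construction effectively says that we enlarge $\boldsymbol{\Psi}$ by the group generated by rotations and reflections of the circle, and any such group action is automatically closed under composition with further rotations and reflections. The only point to be careful about is staying inside $\mathbb{D}$ after the shifts, which is handled by reducing modulo $2\pi$; and, in the discrete case, verifying that $\eta$ ranges over the same lattice that defines $\mathbb{D}$, so that the translated arguments of $f_{\Theta}$ still belong to the support. Once closure is checked, Theorem~\ref{theo:primo} is not even needed as a black box, since the two definitions have been verified directly.
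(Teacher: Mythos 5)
Your argument is correct and rests on the same core computation as the paper's proof, namely that composing $\theta\mapsto\delta(\theta+\xi)$ with a further rotation or reflection yields another map of the same form, so the augmented family $\{f_{\Theta}(\delta\,\cdot-\xi\,|\,\boldsymbol{\psi})\}$ is closed; your explicit updates $(\delta,\xi-\delta\eta)$ and $(-\delta,\xi-2\delta\eta)$ are exactly the special cases of the paper's composition rule $\delta^{**}=\delta\delta^{*}$, $\xi^{**}=\delta\xi^{*}+\xi$. The only difference is presentational: the paper applies one generic second transformation and invokes Theorem~\ref{theo:primo}, whereas you verify Definitions~\ref{def:def1} and~\ref{def:def2} directly, which is equally valid.
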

\begin{proof}
	Let $g(\Theta)= \delta(\Theta+\xi)$,
	we have  that $f_{\Theta^*}( \theta^*|\boldsymbol{\psi}^*)=f_{\Theta}(\delta\theta^*-\xi |\boldsymbol{\psi})$. \\
	 {Let us define $\Theta^{**} = \delta^{*}(\Theta^*+\xi^*)$, following  Theorem \ref{theo:primo}, if $f_{\Theta^{**}}$ belongs to the same parametric family as $f_{\Theta^{*}}$, then $f_{\Theta^*}$ is ICID and ICO. As} $f_{\Theta^{**}}( \theta^{**}|\boldsymbol{\psi}^{**})=f_{\Theta^*}(\delta^*\theta^{**}-\xi^{*} |\boldsymbol{\psi}^*)$ we have the following identity: 
	$	f_{\Theta^{**}}(\theta^{**}|\boldsymbol{\psi}^{**})=f_{\Theta^*}(\delta^*\theta^{**}-\xi^* |\boldsymbol{\psi}^*)=f_{\Theta}(\delta(\delta^*\theta^{**}-\xi^*)-\xi |\boldsymbol{\psi}).
	$ 		 
	Now   let $\delta^{**} = \delta \delta^*$ and $\xi^{**}= (\delta \xi^*+\xi  ) $   we can write 
	$
	f_{\Theta^{**}}(\theta^{**}|\boldsymbol{\psi}^{**})=f_{\Theta^*}(\delta^*\theta^{**}-\xi^* |\boldsymbol{\psi}^*)=f_{\Theta}(\delta^{**}\theta^{**}-\xi^{**} |\boldsymbol{\psi}),
	$
	 {i.e.} $f_{\Theta^{**}}$ is of the same functional form as $f_{\Theta^*}$ and this proves theorem \ref{theo:primo}.
\end{proof}
 {Distributions obtained from Proposition \ref{prep:1} are said to be \emph{invariant}. In other words, by applying the previous proposition, we get the \emph{invariant version} of the density of $\Theta$.}
	
\section{ {Circular distributions:} { Examples} }

 {In this section  conventional circular distributions are investigated, focusing on the invariance properties previously introduced. To simplify the notation, we avoid the use of the modulus $2\pi$ operator, and define $\Theta^*= \delta(\Theta+\xi) \in \delta(\mathbb{D}+\xi)$.}

\subsection{Continuous distributions}

\subsubsection*{Wrapped Normal}
A circular variable $\Theta$ is said to have wrapped normal distribution with parameters $\mu$ and $\sigma^2$ ($\Theta \sim WN(\mu, \sigma^2)$)  if 
\begin{equation} \label{eq:wn}
f_{\Theta}(\theta| \mu, \sigma^2)=\sum_{k = -\infty}^{\infty} \left(2 \pi \sigma^2\right)^{-\frac{1}{2}}e^{\frac{(\theta+2 \pi k -\mu)^2}{2 \sigma^2}   }. 
\end{equation}
The random variable $\Theta^*$ has density
\begin{equation} \label{eq:wn2}
f_{\Theta^*}(\theta^*| \mu, \sigma^2, \delta, \xi)=\sum_{k = -\infty}^{\infty} \left(2 \pi \sigma^2\right)^{-\frac{1}{2}}e^{\frac{(\delta\theta^*-\xi+2 \pi k -\mu)^2}{2 \sigma^2}   }=\sum_{k = -\infty}^{\infty} \left(2 \pi \sigma^2\right)^{-\frac{1}{2}}e^{\frac{(\theta^*+2 \pi k -\delta(\xi+\mu))^2}{2 \sigma^2}   }.
\end{equation}
To prove that \eqref{eq:wn} is ICID and ICO, following Theorem \ref{theo:primo}, we need to prove  that  \eqref{eq:wn2} is still a wrapped normal density, i.e. there must exist two parameters $\mu^{*}$ and $\sigma^{2*}$, apart from the trivial case $\mu^{*}=\mu$ and $\sigma^{2*}=\sigma^2$, that are such that:  
\begin{equation} \label{eq:wn3}
\sum_{k = -\infty}^{\infty} \left(2 \pi \sigma^2\right)^{-\frac{1}{2}}e^{\frac{(\delta\theta^*-\xi+2 \pi k -\mu)^2}{2 \sigma^2}   } = \sum_{k = -\infty}^{\infty} \left(2 \pi \sigma^{2*}\right)^{-\frac{1}{2}}e^{\frac{(\theta^*+2 \pi k -\mu^*)^2}{2 \sigma^{2*}}   }.
\end{equation}
Clearly \eqref{eq:wn3} is true for $\mu^{*} = \delta(\xi+\mu)$ and $\sigma^{2*}=\sigma^{2}$, i.e. the wrapped normal is ICID and ICO. \\ 

\subsubsection*{Von Mises}
A circular variable $\Theta$ is said to follow a von Mises distribution with parameters $\mu$ and $\kappa$ ($\Theta \sim vM(\mu, \kappa)$)  if 
$
f_{\Theta}(\theta| \mu, \kappa) = \frac{e^{\kappa \cos (\theta-\mu)}}{2 \pi I_0(\kappa)}
$
where $I_0(\kappa)$ is the modified Bessel function of order 0. As we did above for the wrapped normal,  we find the density of $\Theta^*$, that is
$
f_{\Theta^*}(\theta^*| \mu, \kappa,\delta,\xi) = \frac{e^{\kappa \cos (\delta \theta^*-\xi -\mu)}}{2 \pi I_0(\kappa)} ,
$
and since $\mu^*=\delta(\xi+\mu)$ and $\kappa^*=\kappa$ are such that: 
$
\frac{e^{\kappa \cos (\delta \theta^*-\xi -\mu)}}{2 \pi I_0(\kappa)}  = \frac{e^{\kappa^* \cos (\theta^*-\mu^*)}}{2 \pi I_0(\kappa^*)}, 
$ 
then the von Mises is ICID and ICO
\subsubsection*{Wrapped skew normal }
Suppose that $\Theta$ is distributed as a wrapped skew normal with parameters $\mu$, $\sigma^2$ and $\lambda$ ($\Theta \sim WSN(\mu, \sigma^2,\alpha)$), then
\begin{equation} \label{eq:wsn}
f_{\Theta}(\theta|\mu, \sigma^2,\alpha) = \sum_{k = -\infty}^{\infty} \frac{1}{\pi \sigma}e^{\frac{(\theta+2 \pi k -\mu)^2}{2 \sigma^2}   } \int_{-\infty}^{\alpha \left( \frac{\theta+2 \pi k -\mu}{\sigma}  \right)} e^{-\frac{t^2}{2}}dt.
\end{equation}
Since the distribution of $\Theta^*$ is 
$
f_{\Theta}(\theta|\mu, \sigma^2,\alpha) = \sum_{k = -\infty}^{\infty} \frac{1}{\pi \sigma}e^{\frac{(\delta\theta^*-\xi+2 \pi k -\mu)^2}{2 \sigma^2}   } \int_{-\infty}^{\alpha \left( \frac{\delta\theta^*-\xi+2 \pi k -\mu}{\sigma}  \right)} e^{-\frac{t^2}{2}}dt,
$
we can easily see that $\Theta^*\sim WSK(\delta(\xi+\mu), \sigma^2,\alpha)$ and then  \eqref{eq:wsn} is ICO and ICID. \\
\subsubsection*{Wrapped Exponential}
Not all distributions for  continuous circular variables have the two  required properties. For example the wrapped exponential of \cite{Rao2004} with density
\begin{equation} \label{eq:we}
f_{\Theta}(\theta| \lambda) = \frac{\lambda e^{-\lambda \theta}}{1-e^{-2 \pi \lambda}}, \, \lambda>0
\end{equation}
it is such that the  density of $\Theta^*$ is
\begin{equation} \label{eq:we2}
f_{\Theta^*}(\theta^*| \lambda,\delta, \xi) = \frac{\lambda e^{-\lambda (\delta\theta^*- \xi) }}{1-e^{-2 \pi \lambda}}. 
\end{equation}
To see that the wrapped exponential is not ICID and ICO we need to prove that does not exist a $\lambda^*$ such that 
\begin{equation} \label{eq:we3}
\frac{\lambda e^{-\lambda (\delta\theta^*- \xi) }}{1-e^{-2 \pi \lambda}}=\frac{\lambda^* e^{-\lambda^* (\theta^*\text{ mod } 2 \pi)}}{1-e^{-2 \pi \lambda^*}}.
\end{equation}
The modulus on the right side of \eqref{eq:we3} is needed because the definition of the wrapped exponential given in equation \eqref{eq:we} assumes that the circular variable belongs to $\mathbb{D}$ while $\Theta^*\in \delta(\mathbb{D}+\xi)$. It is not easy to solve \eqref{eq:we3} for $\lambda^*$ but we can simplify the demonstration by noticing that all the distributions can be expressed as the  product of a normalization constant and a kernel. If two density are equal, their kernel must be identical. Then it is sufficient to prove that it does not exist $\lambda^*$ such that the following relation is true: 
\begin{equation} \label{eq:we4}
e^{-\lambda \delta\theta^* }= e^{-\lambda^*  (\theta^* \text{ mod } 2 \pi)}.
\end{equation}
If we can prove that \eqref{eq:we4} is not true for a particular choice of $\xi$ and $\delta$ we have shown that the wrapped exponential is not ICID and ICO. We consider $\delta=1$ and $\xi>0$ and we evaluate the relation \eqref{eq:we4} at $\Theta^*= 2 \pi$ and $\Theta^*= \xi$, note that for the particular set of parameters we chose, the points $2 \pi$ and $\xi$ belong to the domain of $\Theta^*$. We have 
$
\lambda 2 \pi = 0$ if $\Theta= 2 \pi,$ and
$\lambda \xi = \lambda^* \xi$, if $ \Theta= \xi. 
$
Then \eqref{eq:we4} is true only if $\lambda=\lambda^*=0$ but  $\lambda>0$  and then it follows that \eqref{eq:we4} is never verified for a valid value of $\lambda$. This proves that  the wrapped exponential is not ICID and ICO. \\
Note that using Preposition \ref{prep:1} we obtain the density in equation \eqref{eq:we2}, that  is the density of the \emph{invariant wrapped exponential}.  
\subsubsection*{Wrapped Weibull}
\cite{Sarma2011} propose the wrapped Weibull:
\begin{equation} \label{eq:weib}
f_{\Theta}(\theta|\lambda) =\sum_{k =0 }^{\infty} \lambda (\theta+2 \pi k)^{\lambda-1}e^{-(\theta+ 2 \pi k)^{\lambda}}.
\end{equation}
To show that \eqref{eq:weib} does not verify ICID and ICO,  we use the characteristic function: 
\begin{equation} \label{eq:wei1}
\varphi_{\theta}(p) = i \sum_{k \in \mathbb{Z}_{odd}^+}\frac{ p^{k}}{k!}\left( 1+\frac{k}{\lambda} \right)- \sum_{k \in \mathbb{Z}_{even}^+}\frac{p^{k}}{k!}\left( 1+\frac{k}{\lambda}\right),
\end{equation}
where $\mathbb{Z}_{odd}^+$ and $\mathbb{Z}_{even}^+$ are respectively the even (zero included) and odd integer numbers. If $\Theta$ and $\Theta^*$ have  the same distribution their characteristic function must be of the same functional form. Since $\Theta^*$ is a linear transformation of $\Theta$ its characteristic function is
\begin{align} 
\varphi_{\theta^*}(p) &=e^{i \delta \xi p } \varphi_{\theta}(\delta p) = \cos(\delta \xi p )\varphi_{\theta}(\delta p) +i \sin(\delta \xi p )\varphi_{\theta}(\delta p) \label{eq:char}\\&
=\sin(\delta \xi p )\sum_{k \in \mathbb{Z}_{odd}^+}\frac{ p^{k}}{k!}\left( 1+\frac{k}{\lambda} \right)-   \cos(\delta \xi p )   \sum_{k \in \mathbb{Z}_{even}^+}\frac{(\delta p)^{k}}{k!}\left( 1+\frac{k}{\lambda}\right)      \\&+i \left(  \cos(\delta \xi p )    \sum_{k \in \mathbb{Z}_{odd}^+}\frac{p^{k}}{k!}\left( 1+\frac{k}{\lambda} \right)-  \sin(\delta \xi p ) \sum_{k \in \mathbb{Z}_{even}^+}\frac{ (\delta p)^{k}}{k!}\left( 1+\frac{k}{\lambda}\right)   \right)\label{eq:wei2}
\end{align}
The real and imaginary parts of \eqref{eq:wei1} and \eqref{eq:wei2} differ unless $\delta=1$ and $\xi=0$, and then the wrapped Weibull is not ICID and ICO

\subsubsection*{Wrapped L\'evy}
Following \cite{fisher1996}, if a  circular random variable $\Theta$ has pdf 
\begin{equation} \label{eq:wl}
f_{\Theta}( \theta|\mu, \sigma^2) = \sum_{k: \theta \frac{l}{2 \pi}+2 k l-\mu>0 }^{\infty}  \sqrt{\frac{\sigma^2}{2 \pi}}\frac{ e^{-\frac{\sigma^2}{2\left(\theta \frac{l}{2 \pi}+2 k l-\mu\right) }}}{\left(\theta \frac{l}{2 \pi}+2 k l-\mu\right)^{\frac{3}{2}}}, \mu \in \mathbb{R}, \sigma^2 > 0
\end{equation}
it is said to be distributed as a wrapped  L\'evy. \\
Here again to prove that \eqref{eq:wl} does not hold the ICID and ICO properties we exploit the characteristic function of $\Theta$ and $\Theta^*$ that are respectively
$
\varphi_{\theta}(p) = e^{i \mu p -\sqrt{-2 i\sigma^2 p }}$ and  
$\varphi_{\theta^*}(p) = e^{i  p \delta (\mu+\xi)  -\sqrt{-2 i \sigma^2 \delta p }}.
$
The two characteristic functions  are of the same kind if the parameters of the distribution of $\Theta^*$ are $\mu^*= \delta (\mu+\xi)$ and $\sigma^{2*}=\sigma^2 \delta $ but if $\delta=-1$  then $\sigma^{2*}$ is negative while, by definition of the Wrapped L\'evy distribution,  it must be positive, then the distribution is not ICID and ICO.
\subsection{Discrete distributions} \label{sec:disc}
We are particularly interested in  discrete circular distributions since we did not found in the literature discrete circular distributions that satisfy the ICID and ICO properties (definitions \ref{def:def1} and \ref{def:def2}), apart the trivial case of the circular uniform. 

\subsubsection*{Discrete circular uniform}
The   discrete circular uniform  density \citep{Merdia1999} for $\Theta$ is of the form:
$
f_{\Theta}=\frac{1}{l}
$
where $l$ is the number of l distinct points equally spaced in $\mathbb{D}$. Clearly $\Theta^*$ has the same exact density and then the discrete circular uniform  is ICID and ICO.\\

\subsubsection*{Wrapped Poisson}
A discrete circular variable $\Theta$ is said to follow a wrapped Poisson distribution \citep{Merdia1999} if  it has pdf
$
f_{\Theta}(\theta| \lambda ) = 
\sum_{k=0  }^{\infty} \frac{\lambda^{\theta  \frac{l}{2 \pi}+ k l }e^{-\lambda}}{\left(\theta \frac{l}{2 \pi} + kl\right)!}. 
$
The wrapped Poisson is not ICID and ICO and to see that we use its characteristic function. From \citep{Girija2014} we have that 
$
\varphi_{\theta}(p)= e^{  -\lambda \left( 1- e^{i \frac{2 \pi p}{l} }  \right)  } = e^{  -\lambda \left( 1- \cos  \frac{2 \pi p}{l} \right)  }    e^{ i \lambda    \sin \frac{2 \pi p}{l}   }
$
and using equation \eqref{eq:char} we can compute
\begin{equation} \label{eq:chwp}
\varphi_{\theta^*}(p)=e^{i\delta \xi p} e^{  -\lambda \left( 1- e^{i\delta \frac{2 \pi p}{l} }  \right)  }=  e^{-\lambda \left( 1- \cos   \frac{2 \pi p}{l} \right)} e^{i \left(  \delta \xi p+  \lambda \sin   \delta \frac{2 \pi p}{l} \right)}.
\end{equation}
The imaginary part of the two characteristic function are equal only if $\delta=1$ and $\xi=0$, i.e. $\Theta^*=\Theta$,then it  follows that  the wrapped Poisson is not ICID and ICO. 

\subsubsection*{Wrapped geometric}
Another  example is  the wrapped geometric    proposed by \cite{Jayakumar2012}. The pdf of $\Theta$ is 
\begin{equation}\label{eq:wg}
	f_{\Theta}( \theta|\lambda) = \frac{\lambda (1-\lambda)^{\theta l/(2 \pi)}}{1-(1-\lambda)^l}.
\end{equation}
We can see that  \eqref{eq:wg} has not the two properties because it is a decreasing function in $\Theta$ and then its maximum value is reached always at $\Theta=0$ and it cannot satisfy the two properties given in definitions \ref{def:def1} and \ref{def:def2}.

\subsubsection*{Wrapped skew Laplace on integers}

The wrapped skew Laplace on  integers of \cite{Jayakumar2012} has pdf 
$
f_{\Theta}( \theta|d,q) = \sum_{k= 0 }^{\infty} \frac{(1-d)(1-q)}{1-dq} \left(d^{\theta \frac{l}{2 \pi}+2 k l} +q^{|-\theta \frac{l}{2 \pi}-2 k l|}\right), \,d,q \in (0,1)
$
In each interval $[l (k-1),l (k-1)+ (l-1) )$ the maximum value of the term inside the sum is reached at $l (k-1)$ that, over the circle, corresponds to the point $\Theta=0$. Here again, as with the wrapped geometric example, since the maximum value of the density is fixed on the point $\Theta=0$, the density cannot be ICID and ICO. 

\subsubsection*{Wrapped binomial}
The wrapped binomial of  \cite{Girija2014} has pdf 
\begin{equation} \label{eq:wbin}
f_{\Theta}( \theta|q,n) = \sum_{k=0}^{k: n-(\theta \frac{l}{2 \pi}+k l)\geq 0 } 
\left(
\begin{array}{c}
n\\
\theta \frac{l}{2 \pi}+k l
\end{array}
\right) q^{\theta \frac{l}{2 \pi}+k l} (1-q)^{n-(\theta \frac{l}{2 \pi}+k l)},
\end{equation}
and characteristic function
\begin{equation} \label{eq:cwb1}
\varphi_{\Theta}(p) = (1-q+qe^{ip})^n.
\end{equation}
while the characteristic function of $\Theta^*$ is 
\begin{equation}\label{eq:cwb2}
\varphi_{\Theta^*}(p) = e^{i \delta \xi p }(1-q+qe^{i\delta p})^n.
\end{equation}
Equations \eqref{eq:cwb1} and \eqref{eq:cwb2} are not of the same functional form and then the wrapped binomial is not ICID and ICO. 

\section{A new discrete circular distribution: The Invariant wrapped Poisson 
} \label{sec:pois}

 {In this Section,  relying on  Proposition  \ref{prep:1}, we build a new discrete circular distribution that is ICID and ICO.  Among the \emph{non-invariant} distributions discussed in the previous Section, we chose the wrapped Poisson  for several reasons. 
First of all the resulting distribution has a wide range of possible shapes, see  Figure \ref{fig:denspois}, ranging from a highly concentrated distribution to the discrete circular uniform as limit case (Section \ref{sec:circmom}). Second its  trigonometric moments can be written in closed form, allowing for an easy computation of   the circular concentration, directional mean and circular skewness (again see   Section \ref{sec:circmom}).  Third, parameters estimates can be easily estimated in a Bayesian framework by introducing some latent variables (see Section \ref{sec:bayest}). }

\subsection{The Invariant wrapped Poisson distribution}.

Let $X \sim Pois(\lambda)$ and let $\Theta^*= (2 \pi X/l) \text{ mod } 2 \pi $ be its circular counterpart, assuming $l$ distinct values over the unit circle. $\Theta^*$  is then distributed as a wrapped Poisson $WPois(\lambda)$. Proposition \ref{prep:1} tells us that the random variable $\Theta = \delta(\Theta^*+\xi)$ has pdf that verifies the two desired properties.  


We include $\delta$ and $\xi$ in the set of characterizing parameters without restricting $\Theta$ to $\mathbb{D}$ (i.e we do not apply  the module $2 \pi$ operation in the transformation $\Theta = \delta(\Theta^*+\xi)$) then the distribution we propose,  
the \emph{ Invariant Wrapped Poisson distribution},  has pdf: 
\begin{equation} \label{eq:pmf}
f_{\Theta}(\theta| \lambda, \delta, \xi ) = 
\sum_{k=0  }^{\infty} \frac{\lambda^{(\delta \theta- \xi   ) l/(2 \pi)+ kl }e^{-\lambda}}{((\delta \theta- \xi   ) l/(2 \pi)+ kl)!}, \,  \Theta \in \delta(\mathbb{D}+\xi)
\end{equation}
This formulation of the IWP produces a distribution with domain dependent on both $\delta$ and $\xi$ this can be problematic for model fitting. On the other hand we mentioned above that we are interested in restricting  $\Theta$ to $[0,2 \pi)$. One advantage of this choice is that by applying the modulus operation when computing $\Theta = \delta(\Theta^*+\xi)$ we obtain a random variable with domain independent from the characterizing parameters. In this setting the IWP pdf becomes:
\begin{equation} \label{eq:pmf1}
f_{\Theta}(\theta| \lambda, \delta, \xi ) = 
\sum_{k=0  }^{\infty} \frac{\lambda^{(\delta \theta- \xi   ) \text{ mod } (2 \pi) l/(2 \pi)+ kl }e^{-\lambda}}{((\delta \theta- \xi   )\text{ mod } (2 \pi) l/(2 \pi)+ kl)!}
\end{equation}
In general we say that if a random variable $\Theta$ has pdf \eqref{eq:pmf1} (or \eqref{eq:pmf}) it has  an \emph{invariant wrapped  Poisson distribution with $l$ values ($IWP_l$)}. If $\delta=1$ and $\xi=0$ then the  $IWP_l$ reduces to the standard  $WP_l$.  

\begin{figure}[t!]
	\centering
	{\subfloat[$\lambda=0.5$, $\xi=0$]{\includegraphics[trim=80 55 55 55,clip,scale=0.38]{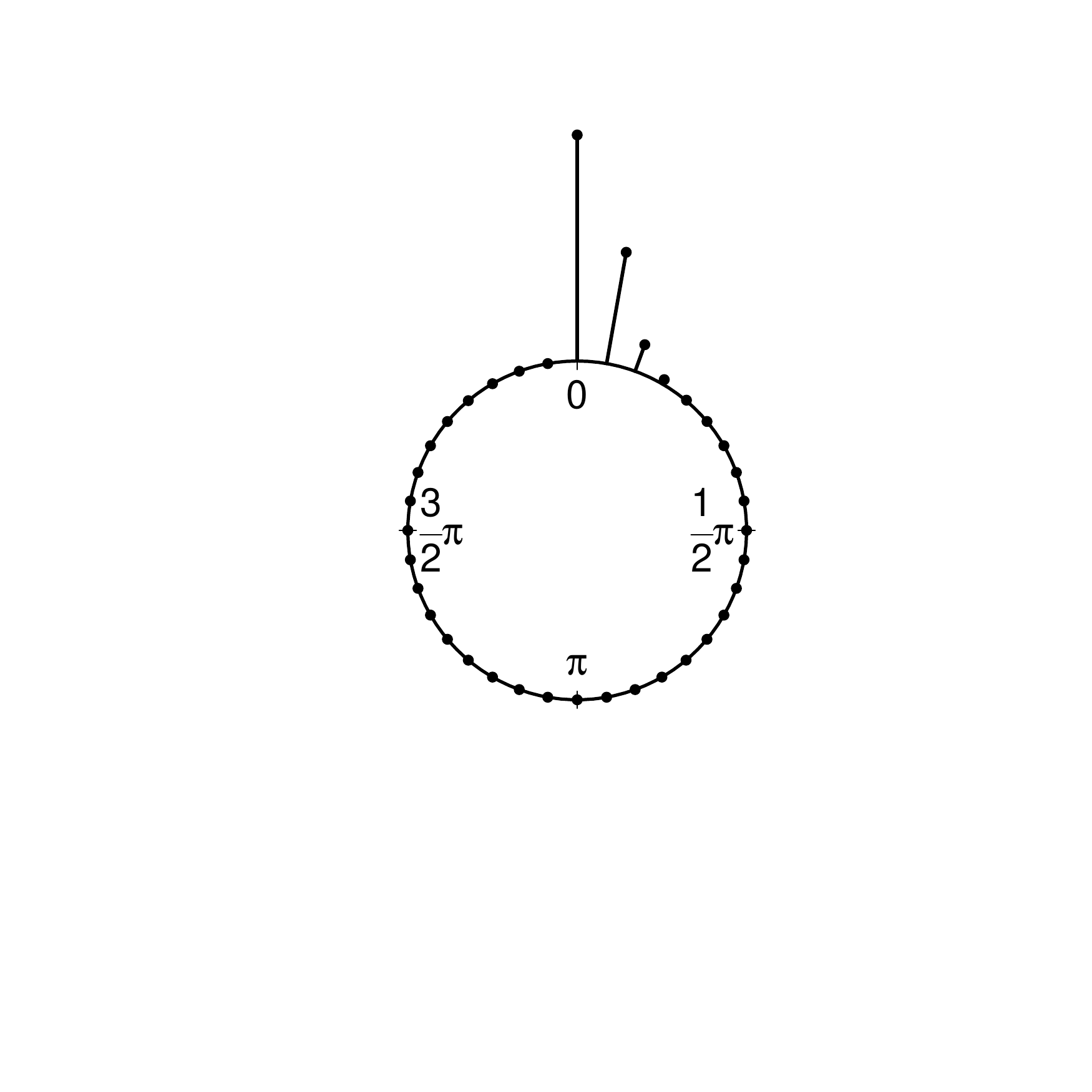}}}
	{\subfloat[$\lambda=1$, $\xi=0$]{\includegraphics[trim=80 55 55 55,clip,scale=0.38]{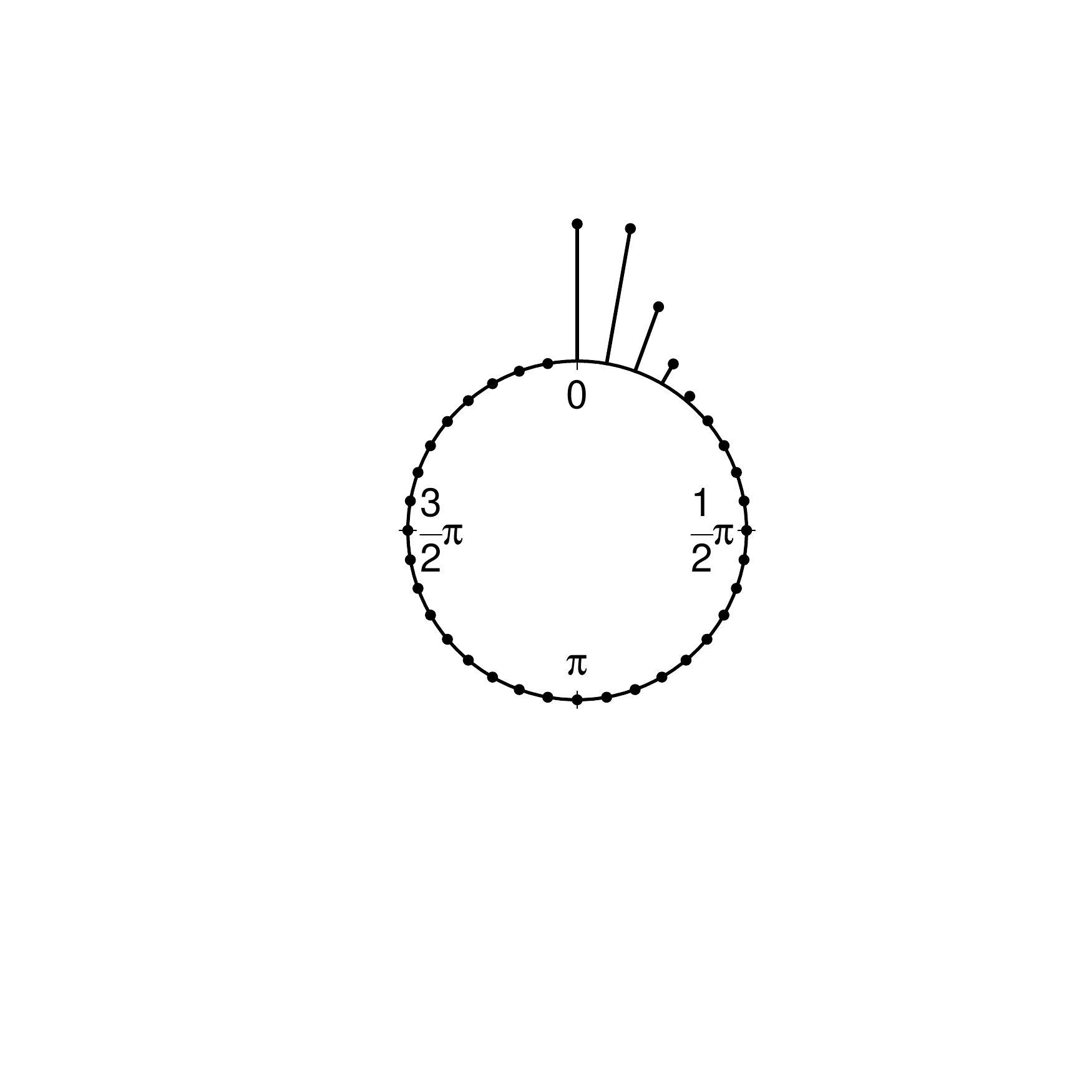}}}
	{\subfloat[$\lambda=2$, $\xi= 35\frac{2 \pi}{36}   $]{\includegraphics[trim=80 55 55 55,clip,scale=0.38]{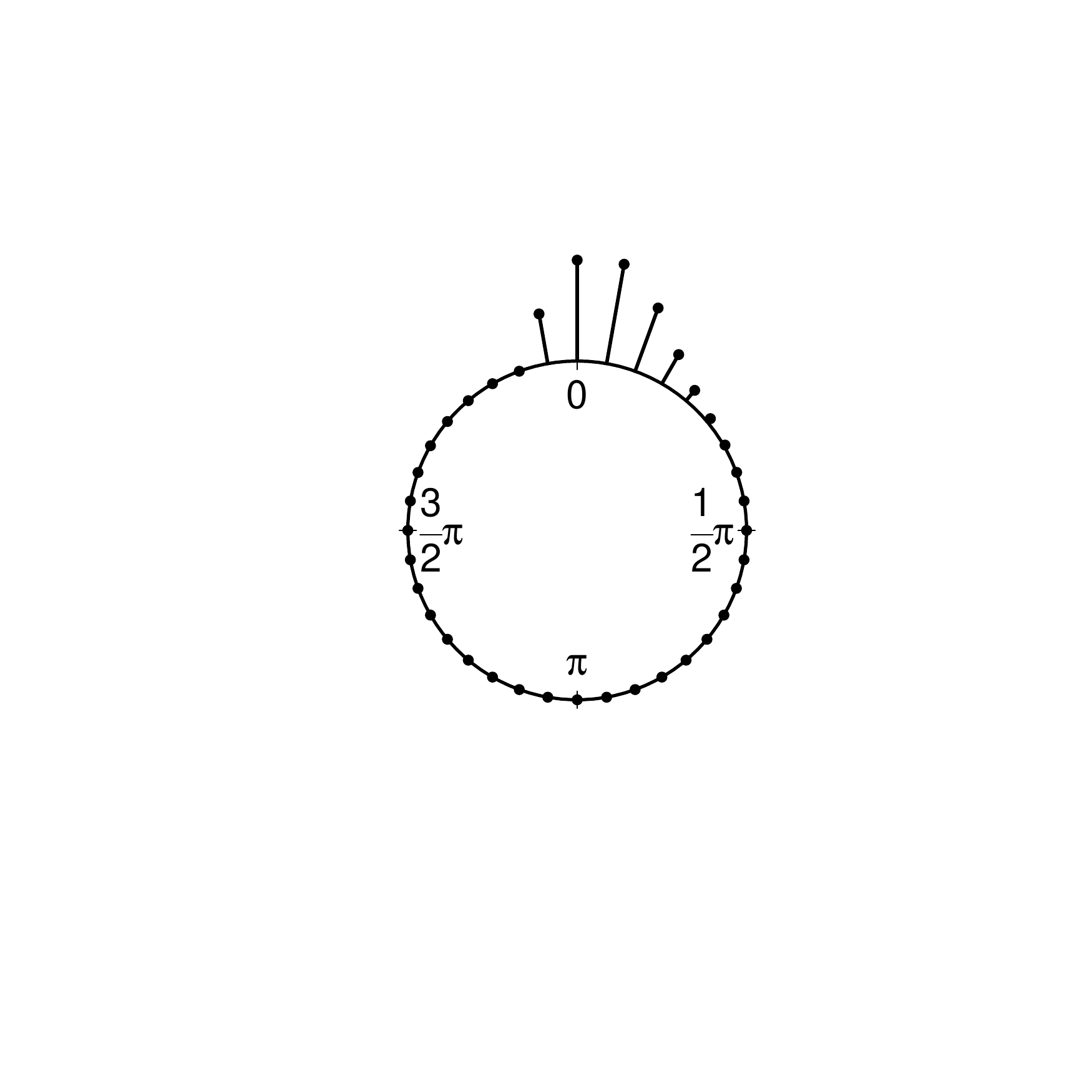}}}\\
	{\subfloat[$\lambda=5$, $\xi=32 \frac{2 \pi}{36}$]{\includegraphics[trim=80 55 55 55,clip,scale=0.38]{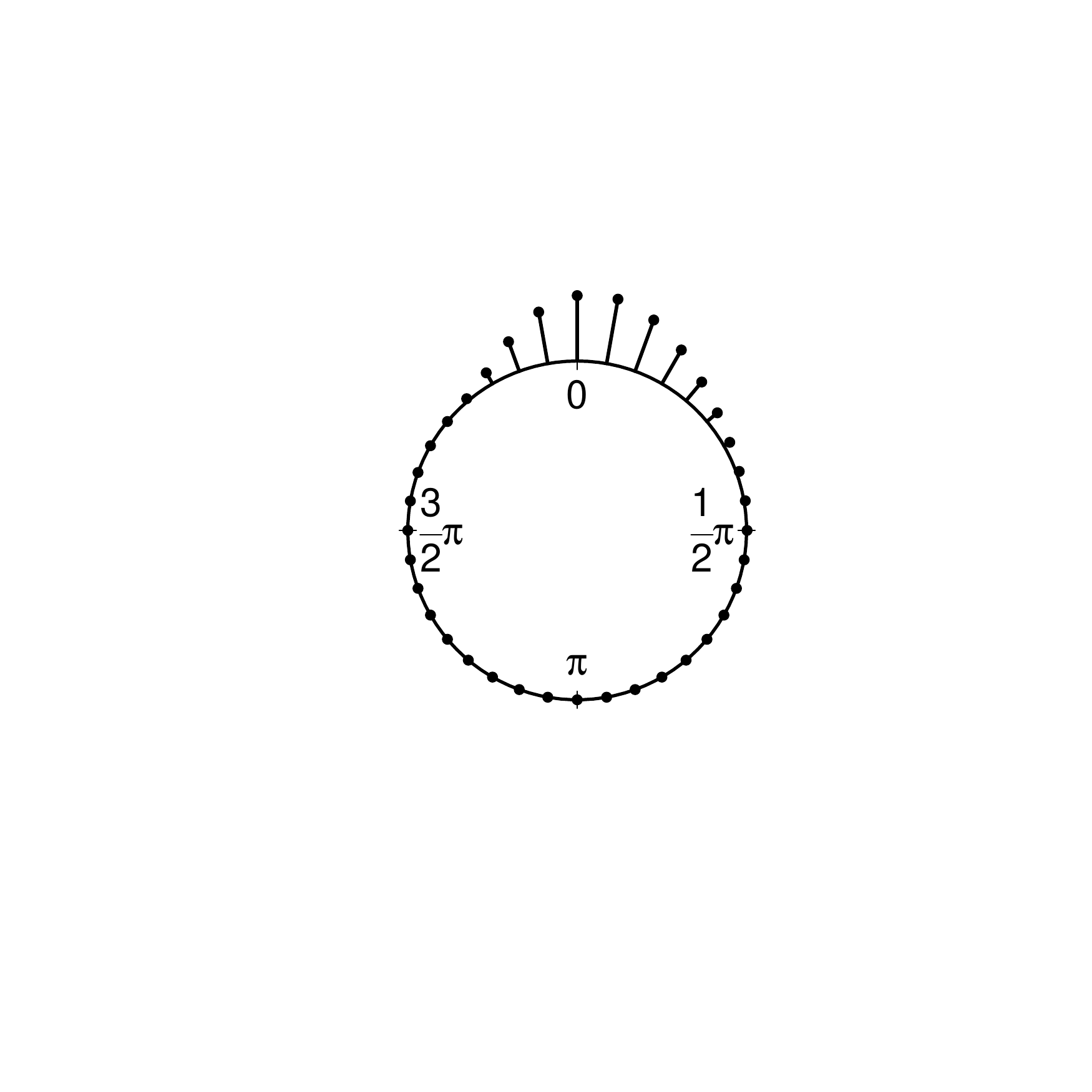}}}
	{\subfloat[$\lambda=20$, $\xi=16 \frac{2 \pi}{36}$]{\includegraphics[trim=80 55 55 55,clip,scale=0.38]{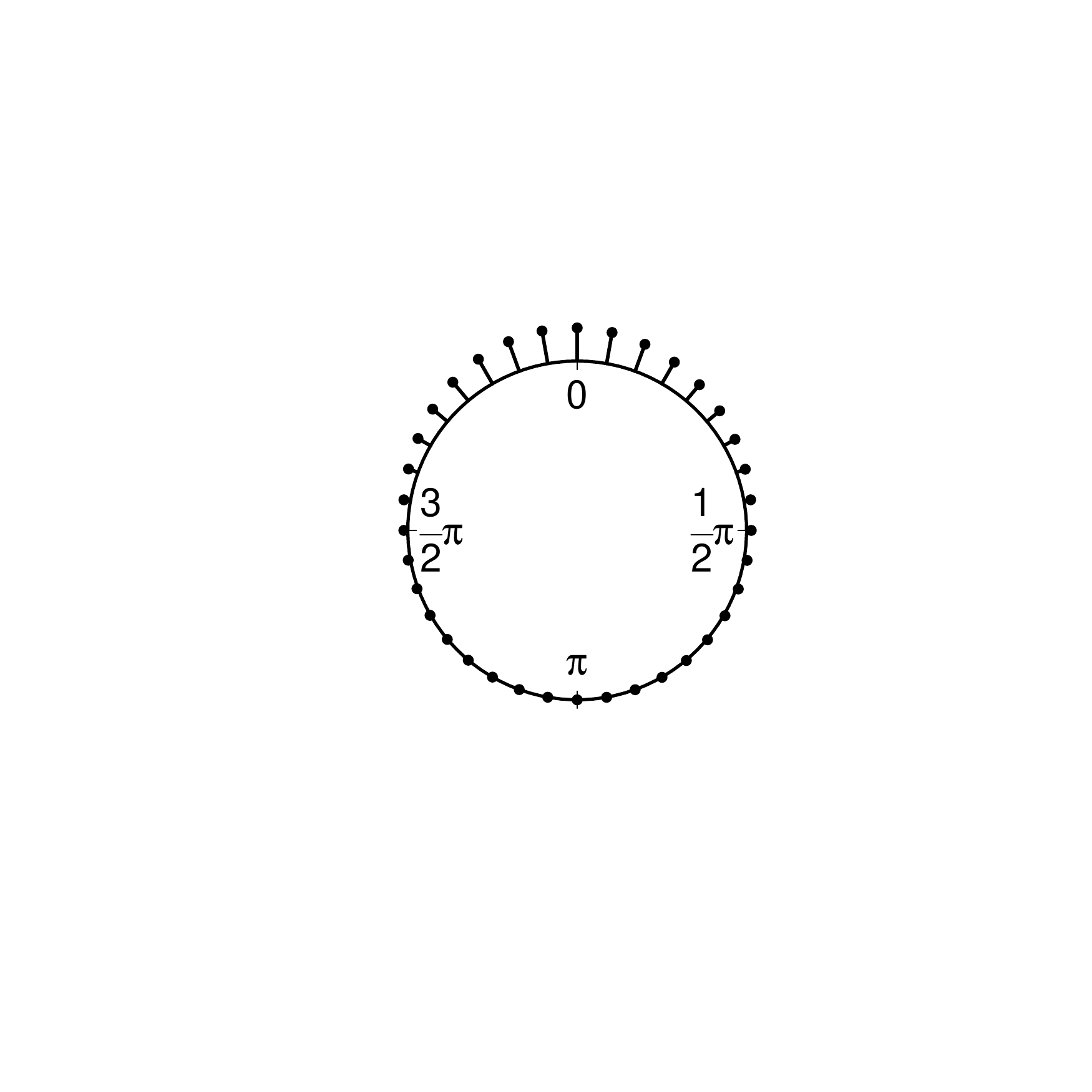}}}
	{\subfloat[$\lambda=500$, $\xi=0$]{\includegraphics[trim=80 55 55 55,clip,scale=0.38]{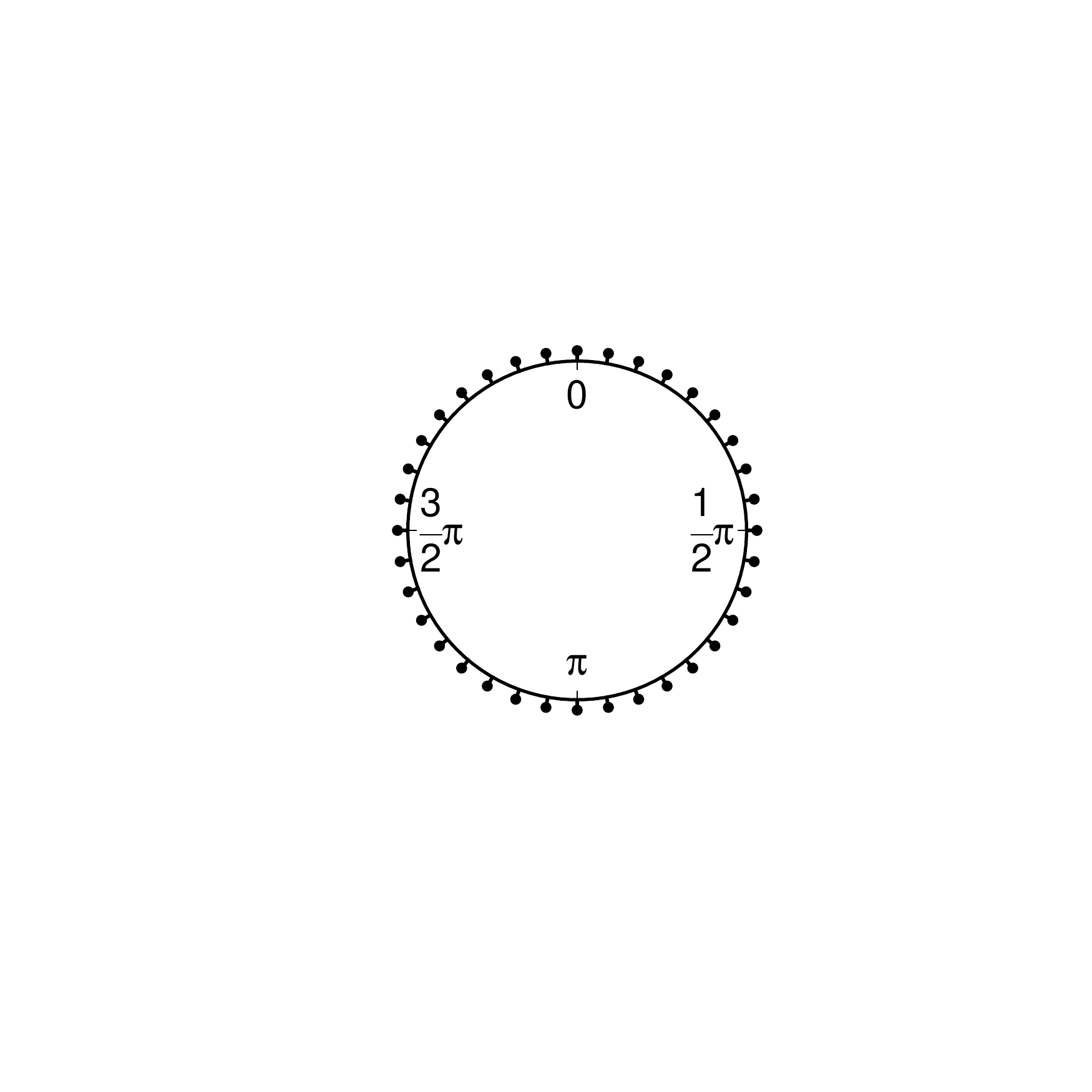}}}
	\caption{Examples of $IWP_{36}$ with $\delta=1$ and different values of  $\lambda$ and $\xi$.} \label{fig:denspois}
\end{figure} 

\subsection{Trigonometric moments } \label{sec:circmom}

The trigonometric moments, defined as $\alpha_p = E \cos p\Theta$ and $\beta_p = E \sin p\Theta$, are quantities used to inform about many  features of  circular distributions, for example the mean direction,  $\mu = \mbox{atan}^*\frac{\beta_1}{\alpha_1}$ (the function $\text{atan}^*$ is the modified $\text{atan}$ function, see for example \cite{Jammalamadaka2001}), the circular concentration, $c = \sqrt{\alpha_1^2+\beta_1^2}$, and the circular skewness, $s= \frac{c\sin(\mu_2-2\mu)}{(1-c)^{\frac{3}{2}}}$  where $\mu_2= \mbox{atan}^*\frac{\beta_2}{\alpha_2}$  \citep[see][]{Merdia1999}.  $\alpha_p$ and $\beta_p$ are related to the characteristic function of the circular distribution, defined only on integer $p$ \citep[see again ][]{Merdia1999}:  $\varphi_{\theta}(p)=E(e^{ip\Theta})= \alpha_p+ i \beta_p$.

 {The characteristic function in equation \eqref{eq:chwp}, is the characteristic function of the random variable that arises applying  Preposition \ref{prep:1} to the wrapped Poisson distribution, then 
\eqref{eq:chwp} is the characteristic function of the invariant version of the wrapped Poisson, i.e. the characteristic function of the IWP. Hence} 
$
\varphi_{\theta}(p)=  e^{-\lambda \left( 1- \cos   \frac{2 \pi p}{l} \right)} e^{i \left(  \delta \xi p+  \lambda \sin   \delta \frac{2 \pi p}{l} \right)}
$
and since $ e^{i \left(  \delta \xi p+  \lambda \sin   \delta \frac{2 \pi p}{l} \right)}= \cos \left(  \delta \xi p+  \lambda \sin   \delta \frac{2 \pi p}{l}  \right)+i \sin  \left(  \delta \xi p+  \lambda \sin   \delta \frac{2 \pi p}{l}  \right)$ we have
$
\alpha_p = e^{-\lambda \left( 1- \cos   \frac{2 \pi p}{l} \right)}$ $\cos \left(  \delta \xi p+  \lambda \sin   \delta \frac{2 \pi p}{l}  \right)$, 
$\beta_p = e^{-\lambda \left( 1- \cos   \frac{2 \pi p}{l} \right)}\sin \left(  \delta \xi p+  \lambda \sin   \delta \frac{2 \pi p}{l}  \right).
$
The mean direction, the circular concentration and the circular skewness are respectively 
\begin{align}
\mu =&  \delta \xi+  \lambda \sin   \delta \frac{2 \pi}{l}, \label{eq:circmean}\\
c=& e^{-\lambda \left( 1- \cos   \frac{2 \pi }{l} \right)},\label{eq:circconc}\\ 
s =&    \frac{c \sin \left( - 2 \lambda \sin(\frac{\delta 2 \pi  }{l}) \left( 1-\cos(\frac{2 \pi  }{l})  \right)  \right)}{(1-c)^{\frac{3}{4}}}. 
\end{align}

The circular concentration is a function of $\lambda$ and $l$, its maximum value, $c=1$, is reached at $\lambda=0$ and it is inversely proportional to $\lambda$. As we expected, $c$ does not depend on $\delta$ and $\xi$. Note that
as $\lambda \rightarrow \infty$ the concentration goes to zero and then the $IWP_l$ becomes a discrete uniform distribution.  \\
 {The directional mean is function of all three parameters and note that if we change $\delta$ the directional mean shift to its specular circular value with respect to initial direction, 
e.g.  if with $\delta=1$ we have $\mu=d$, then with $\delta=-1$ we have $\mu=l-d$. \\
Note that since the circular concentration depends only on $\lambda$ while the directional mean is a function of all the three parameters,  for a given value of $c$, i.e. a given value of $\lambda$,  we can change the directional mean and let  $c$ unchanged by  modifying the value of $\xi$. Hence the IWP circular mean and concentration are independent, breaking the dependence between  directional mean and circular concentration that the wrapped Poisson,   inherits  from the dependence between the mean and the variance of the Poisson. }

The skewness depends on both $\lambda$ and $\delta$ with a  complex functional form and we can not say much about its value, apart that $s \rightarrow 0$ as $\lambda \rightarrow \infty$ since $c \rightarrow 0$. It is of interest to understand how the sign of $s$  changes. Since $c \in [0,1]$ the sign of $s$ depends only on  $\sin \left( - 2 \lambda \sin(\frac{\delta 2 \pi  }{l}) \left( 1-\cos(\frac{2 \pi  }{l})  \right)\right) $ then 
let 
$
h_r = \frac{ r \pi}{2\sin(\frac{2 \pi}{l})\left(1-\cos(\frac{2 \pi}{l})\right)}, \, r=0,1,\dots,
$
and
consider the intervals  
\begin{equation} \label{eq:int}
h_r  \leq\lambda<h_{r+1}.
\end{equation} 
%
%
%
%
%
%
%
Inside the intervals in equation \eqref{eq:int}
the sign of $\sin \left( - 2 \lambda \sin(\frac{\delta 2 \pi  }{l}) \left( 1-\cos(\frac{2 \pi  }{l})\right)\right) $ depends only on $\delta$. Precisely if  $r$ is even the sign of $s$ is positive if $\delta=-1$ and negative otherwise, while if $r$ is odd $s \leq 0$ if $\delta=-1$ and $s > 0$ otherwise. The sign of the skewness depends on both $\lambda$ and $\delta$.    
Remark that  at the right limit  of the intervals \eqref{eq:int} the circular concentration becomes
\begin{equation} \label{eq:d}
\exp\left\{-\frac{(r+1) \pi}{2 \sin(\frac{2 \pi}{l})}\right\}.
\end{equation}
\eqref{eq:d} is inversely proportional to $r$ and it decreases with $l$ as soon as $l\geq 4$.  For $l$  large enough (indicatively $l\geq 20$), at the right limit of the first interval ($h_0  \leq\lambda<h_{1}$) the circular concentration \eqref{eq:d} is smaller than 0.007, the invariant wrapped Poisson starts to become indistinguishable from the discrete uniform and the skewness  of the distribution is no more a useful information. Then if $l\geq 20$, in the range of values of $\lambda$ where $s$ gives information on the shape of the distribution, the sign of the skewness depends only on $\delta$.

\subsection{Random number generation} \label{sec:gen}

We have two different, and easy, ways to generate a value from a $IWP_l(\lambda,\delta, \xi)$.\\
In the first method we   simulate $Z$ from a Poisson with parameter $\lambda$ and then we obtain the associated wrapped circular variable   $\Theta=\left(Z\frac{2 \pi}{l}\right) \text{ mod }2 \pi$ that is distributed as  a $WP_l(\lambda)$. The discrete circular random variable $\Theta^* = \delta(\Theta+\xi)$ is then distributed as a  $IWP_l(\lambda,\delta, \xi)$. \\
To introduce the second method, note that, if $Z \sim P(\lambda)$,  the density of the random variable $Y= \delta(Z\frac{2\pi}{l}+\xi)$ is 
$
f_{Y}(y|\lambda, \delta, \xi) = \frac{\lambda^{ (\delta Y-\xi)\frac{l }{2 \pi}}e^{-\lambda}}{\left((\delta Y-\xi)\frac{l}{2 \pi}\right)!},
$
and we can see that the pdf of $\Theta$, given in equation \eqref{eq:pmf},   arises by wrapping the random variable $Y$. Then if $Z $ is a drawn from a $P(\lambda)$, the random variable $\Theta=Y \text{ mod }2 \pi$, where $Y= \delta(Z\frac{2\pi}{l}+\xi)$,  is invariant wrapped Poisson distributed with parameter $\lambda$, $\delta$ and $\xi$.\\
The two  proposed methods can be applied to the generation of any invariant wrapped distribution, for example if we use the first method and we first generate $X  $ from a geometric distribution with parameter $\lambda$, then  $\Theta=\left(Z\frac{2 \pi}{l}\right) \text{ mod }2 \pi$ is distributed as a wrapped geometric and  $\Theta^* = \delta(\Theta+\xi)$ follows an invariant wrapped geometric with parameters $\lambda$, $\delta$ and $\xi$.

\subsection{Relation with the wrapped normal } \label{sec:relwn}

	It is well know that a Poisson distribution can be well approximated by a normal when the Poisson parameter when  is ``large' enough''. 
	Then, when variability is high,  if $Z \sim P(\lambda)$, we have $f_{Z}(z|\lambda) \approx \phi(z|\lambda-0.5,\lambda)$, where $\phi(\cdot|\cdot,\cdot)$ is the normal pdf. It follows that the density of  $Y =  \delta (\lambda Z  \frac{2 \pi}{l}+\xi)$  can be well approximated by 
	$\phi\left(y| \delta((\lambda-0.5)  \frac{2 \pi}{l}+\xi),   \lambda  \left(\frac{2 \pi}{l}\right)^2  \right)$. As described in Section \ref{sec:gen}, the invariant wrapped Poisson can be obtained by wrapping $Y$:
	$
	f_{\Theta}(\theta| \lambda,\delta, \xi)= \sum_{k=0}^{\infty}f_{Y}(\theta+2 \pi k| \lambda,\delta, \xi ),
	$
	and since $f_{Y}(y| \lambda,\delta, \xi ) \approx \phi\left(y| \delta ((\lambda-0.5)  \frac{2 \pi}{l}+\xi),   \lambda  \left(\frac{2 \pi}{l}\right)^2  \right)= \phi\left(y| (\lambda-0.5) \frac{2 \pi}{l}+\xi,   \lambda  \left(\frac{2 \pi}{l}\right)^2  \right)$  we have 
	\begin{equation}\label{eq:nd}
	f_{\Theta}(\theta| \lambda,\delta, \xi)\approx \sum_{k=0}^{\infty}\phi\left(\theta+2 \pi k| (\lambda-0.5) \frac{2 \pi}{l}+\xi,   \lambda  \left(\frac{2 \pi}{l}\right)^2  \right).
	\end{equation}
	 The sum \eqref{eq:nd} is the density of a wrapped normal that well approximate  $f_{\Theta}(\theta| \lambda,\delta, \xi)$ when $\lambda$ is large enough  and then the mean of the normal density in \eqref{eq:nd} is far away from  0. When $l$ is relatively small a continuity correction must be added to account for the different type of distributions support.\\
Remark that when the ratio $\lambda/l^2$ is large the WN in \eqref{eq:nd} is indistinguishable from a circular uniform distribution \citep[see][for details]{Jona2013},in the same way  the IWP is indistinguishable from a circular discrete uniform. As an example, in Figure \ref{fig:supdist} the sup distance between the IWP, with $\delta=1,\xi=0$ and the corresponding approximating WN is reported for several values of $\lambda$ and $l$. It is clear that the approximation is accurate at the third decimal figure even with relatively small $\lambda/l$ ratios. However for the use we are envisioning in the sequel we consider ``acceptable'' the approximation when the sup distance between the two distributions is less than $0.001$.  In the next section we are going to  use this approximation to derive an efficient Markov chain Monte Carlo (MCMC) algorithm. \\ 

\begin{figure}
\begin{center}
\includegraphics[scale=0.3]{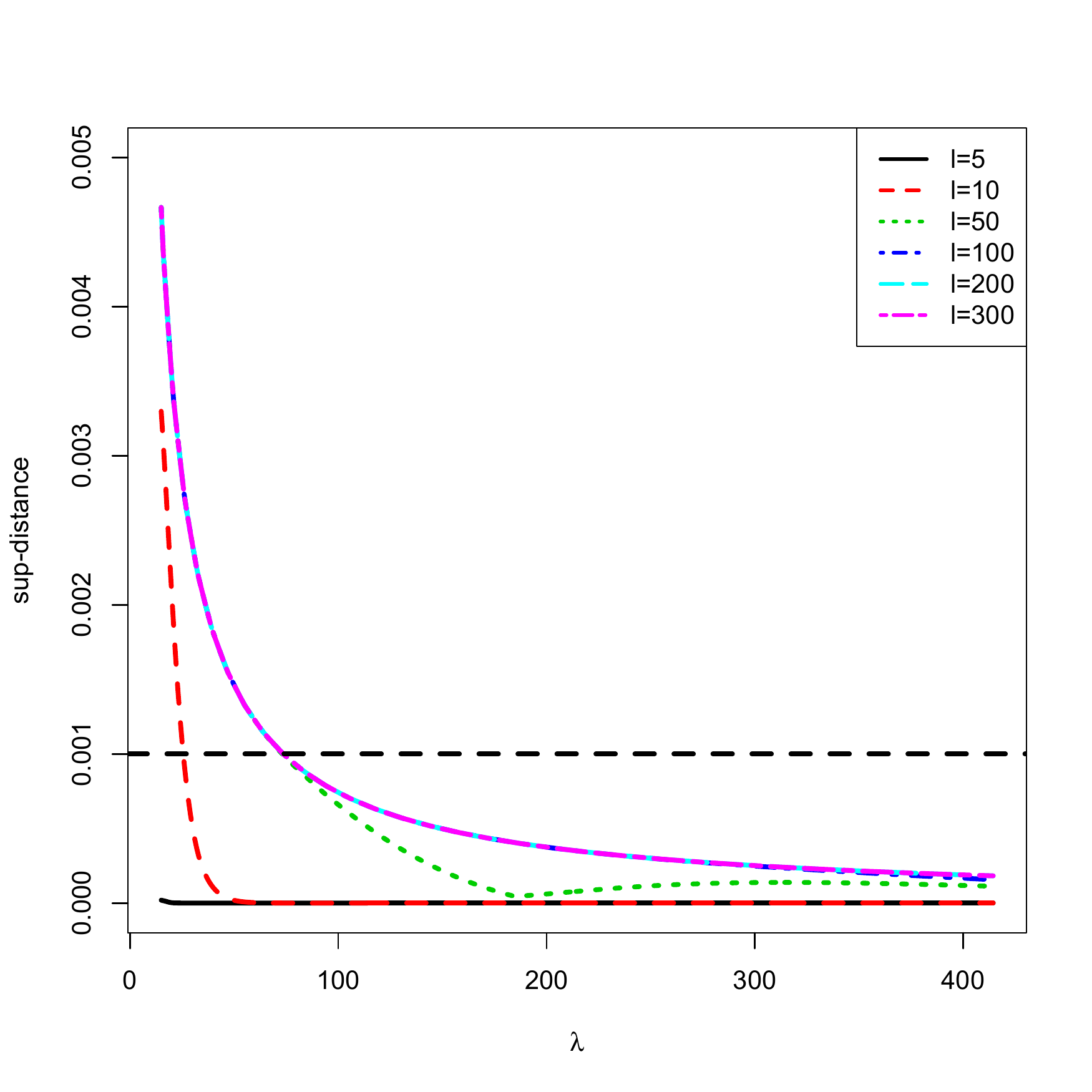}
\caption{Sup distance between Wrapped Normal approximation and Invariant Wrapped Poisson as a function of $\lambda$, dashed lines correspond to different values of $l=5,10,50,100,200,300$}\label{fig:supdist}
\end{center}
\end{figure}

\section{Statistical inference: Bayesian estimation} \label{sec:bayest}

In this section we illustrate how to obtain estimates of the IWP parameters and prediction from the same distribution in a Bayesian framework.  Notice that
wrapped distributions have convenient inferential properties  because, by introducing a latent variable in a data augmentation approach \citep[see for example][]{coles98}, we can linearize the circular variable avoiding the evaluation of the infinite sum involved in their definitions.   Here we  consider the random variable $X= (\delta \Theta- \xi   ) \text{ mod } (2 \pi) l/(2 \pi)+ Kl$, where in this case $K$ is a latent random variable. The joint pdf of $(\Theta,K)$ is the summand of \eqref{eq:pmf1} and a marginalization over $K$ gives $f_{\Theta}$: 
$
f_{\Theta}(\theta|  \lambda, \delta, \xi) = 
\sum_{k=0  }^{\infty} f_{\Theta,K}(\theta,k|\lambda, \delta, \xi).
$
Now suppose to observe a sample of  $T$  values from a discrete circular variable $\Theta_t \sim IWP_l(\lambda, \delta,\xi), t=1,\dots, T$, we want to run posterior and predictive inference using this sample. The parameters are $\boldsymbol{\psi}=(\lambda,\delta,\xi)$ and to facilitate estimation we introduce the latent variables $K_t\in \mathbb{Z}$. Let   $X_t= (\delta \Theta_t- \xi   ) \text{ mod } (2 \pi) l/(2 \pi)+ K_tl$, $\mathbf{X}=\{X_t\}_{t=1}^T$, $\boldsymbol{\Theta}= \{\Theta_t\}_{t=1}^T$ and $\mathbf{K}= \{K_t\}_{t=1}^T$ we can write the (augmented) likelihood of $(\boldsymbol{\Theta}, \mathbf{K})$ as
$
f_{\boldsymbol{\Theta}, \mathbf{K}}(\boldsymbol{\theta},\mathbf{k}|\boldsymbol{\psi}) = \prod_{t=1}^T \frac{\lambda^{x_t}e^{-\lambda}}{x_t!}= \prod_{t=1}^T\frac{\lambda^{(\delta \theta_t- \xi   ) \text{ mod } (2 \pi) l/(2 \pi)+ k_tl }e^{-\lambda}}{((\delta \theta_t- \xi   )\text{ mod } (2 \pi) l/(2 \pi)+ k_tl)!}. 
$
As usual in the Bayesian framework, to estimate the posterior distribution of  $\mathbf{k},\boldsymbol{\psi}|\boldsymbol{\theta}$ we obtain posterior samples using a Markov chain Monte Carlo (MCMC) algorithm. From \eqref{eq:circmean} it is clear that all parameters are involved in the evaluation of the  directional mean. This implies that  a component-wise simulation scheme based on parameters full conditional distributions, could be highly inefficient.  A typical solution  to this problem is to block sample \citep{finley2013} the parameters. Then by  choosing the appropriate prior distributions   we can update $\{\lambda, \delta,\xi\}$ jointly. We assume independence between $\lambda$ and $(\delta, \xi)$ and    as prior distribution for $\lambda$  a Gamma, $\lambda \sim G(v,m)$ while no specific requirement are needed for modeling prior information on $\delta$ and $\xi$, indeed the distribution of $\xi$ can be used to model  prior information on the sign of the skewness if $l\geq 20$, see Section \ref{sec:circmom}.\\
 Using composition  sample, to draw a sample from $\lambda, \delta, \xi|\mathbf{k},\boldsymbol{\theta}$ we can first sample from $\delta, \xi|\mathbf{k},\boldsymbol{\theta}$ and then from $\lambda| \delta, \xi,\mathbf{k},\boldsymbol{\theta}$. Using the linear variables $X_t$s and the   conjugacy between the Poisson distribution and the Gamma we can find the distribution of $\delta, \xi|\mathbf{k},\boldsymbol{\theta}$:
\begin{equation} \label{eq:ful1}
f_{\delta, \xi|\mathbf{k},\boldsymbol{\theta}}(\delta, \xi|\mathbf{k},\boldsymbol{\theta}) \propto \frac{\Gamma( \sum_{t=1}^T x_t+v )}{   (m+T)^{\sum_{t=1}^T x_t+v}    \prod_{t=1}^T\left( x_t \right)! }f_{\delta,\xi}(\delta,\xi)
\end{equation}
We  evaluate \eqref{eq:ful1} for all the possible combinations of  values  $(\delta,\xi)$, and then we can sample from the discrete distribution $f_{\delta, \xi|\mathbf{k},\boldsymbol{\theta}}(\delta, \xi|\mathbf{k},\boldsymbol{\theta}) $. The full conditional of $\lambda$ is a $G(\sum_{t=1}^Tx_t+v, T+m)$ and  then we can easily simulate from it. To conclude the  MCMC specification  the latent variable $k_t$ can be simulated with a Metropolis step. \\
We occasionally experienced slow mixing determined by the separate sampling of $\lambda$ and $k_t$'s.  Block sampling of $\lambda, \mathbf{k}$ is computationally unfeasible.   Hence, to solve the problem, we need a different strategy: we are going to use the  density approximation described in  section \ref{sec:relwn} to ``appropriately'' limiting $\lambda$ and $k_t$'s ranges and we adopt a different formalization for the winding numbers  to be used in the MCMC algorithm. \\
We know that the  IWP can be well approximated  by the WN when the ratio $\lambda/l$ is relatively large, and then both distributions are closed to the circular uniform (discrete and continuos respectively).  \cite{Jona2013} show that, for a given sample size we can find,   through simulation, the  maximum value of the wrapped normal variance, $\sigma_{\max}^2$ at which   a wrapped normal and a circular uniform are not significantly different. Let $\lambda_{max} =  \left(\frac{l}{2 \pi}\right)^2\sigma_{\max}^2$, then  \eqref{eq:nd} with $\lambda= \lambda_{max}$ becomes 
$
f_{\Theta}(\theta| \lambda_{max},\delta, \xi)\approx \sum_{k=0}^{\infty}\phi\left(\theta+2 \pi k| (\lambda_{max}-0.5)  \frac{2 \pi}{l}+\xi,  \sigma_{\max}^2 \right).
$
the sum describes a wrapped normal density evaluated at  $\sigma^2\approx \sigma_{\max}^2$ implying that the WN is almost a circular uniform.  Again the  $IWP_l(\lambda_{max}, \delta, \xi)$ and the WN  are indistinguishable from a (discrete and continuos) circular uniform and we are not interested in  $\lambda>\lambda_{\max}$. Then we can truncate  $\lambda$ domain  to $(0,\lambda_{\max})$. The constrained domain of $\lambda$ and again a result from \citet{Jona2013}, let us define a limited  range for $k_t$. We define   $k_{max}$, such that for any value of $\lambda$, the equation 
\begin{equation} \label{eq:aa2}
\sum_{k_t=0}^{k_{max}}\frac{\lambda^{(\delta \theta_t- \xi   ) \text{ mod } (2 \pi) l/(2 \pi)+ k_tl }e^{-\lambda}}{((\delta \theta_t- \xi   )\text{ mod } (2 \pi) l/(2 \pi)+ k_tl)!}. 
\end{equation}
 well approximate the density of the IWP.  \\
\citet{Jona2013} show that we can approximate the wrapped normal density if we let $k_t \in [\hat{k}_{low}, \hat{k}_{up}]$, with  $\hat{k}_{low}>-\frac{3 \sigma}{2 \pi}+\frac{\mu}{2 \pi}+\frac{1}{2}$ and  $\hat{k}_{up}<\frac{3 \sigma}{2 \pi}+\frac{\mu}{2 \pi}-\frac{1}{2}$ where $\mu$ is the mean  of the WN. Notice that both $\hat{k}_{low},\hat{k}_{up}$ depend on the WN parameters that in turn depend directly on $\lambda$ in \eqref{eq:nd}. Here  we are looking for an upper bound for $k_t$ and then we can focus on  $\hat{k}_{up}$. 
If we set $\lambda=\lambda_{max}$ we obtain the desired upper bound for $k_t$ ($k_{max}$)
Then let $k_{\max}=  \lceil \frac{3 \sqrt{\lambda_{max}}}{l}+\frac{\lambda_{max}}{l}-\frac{1}{2} \rceil$ and using the result in \cite{Jona2013}, 
 the probability mass captured by the approximation \eqref{eq:aa2} is  
\begin{equation} \label{eq:aa3}
\sum_{\theta_t \in \mathbb{D}}\sum_{k_t=0}^{k_{max}}\frac{\lambda^{(\delta \theta_t- \xi   ) \text{ mod } (2 \pi) l/(2 \pi)+ k_tl }e^{-\lambda}}{((\delta \theta_t- \xi   )\text{ mod } (2 \pi) l/(2 \pi)+ k_tl)!}>0.997,\, \forall \lambda \in [0, \lambda_{max}]. 
\end{equation}
and \eqref{eq:aa3} becomes close to 1 when $\lambda << \lambda_{max}$. 
Now  we decompose the winding numbers in two components:  $k_t = (\tilde{k}+\bar{k}_t)\text{ mod } k_{max} $, one ($\bar{k}_t \in [0, k_{max}]$) specific to the $t^{th}$ observation and one $(\tilde{k}\in [0, k_{max}])$ common to all $k_t$s.   Let $\bar{\mathbf{k}}= \{\bar{k}_t\}_{t=1}^T$,   the pmf of $\boldsymbol{\theta},\tilde{k},\bar{\mathbf{k}}$, based on the approximated density \eqref{eq:aa2},  is 
$
\prod_{t=1}^T\frac{1}{k_{max}}\frac{\lambda^{(\delta \theta_t- \xi   ) \text{ mod } (2 \pi) l/(2 \pi)+ (\tilde{k}+\bar{k}_t)\text{ mod } (k_{max}) l }e^{-\lambda}}{(\delta \theta_t- \xi   )\text{ mod } (2 \pi) l/(2 \pi)+ (\tilde{k}+\bar{k}_t)\text{ mod } (k_{max})l)!}=\prod_{t=1}^T \frac{\lambda^{x_t}e^{-\lambda}}{x_t!k_{max}}
$
where $x_t=(\delta \theta_t- \xi   )\text{ mod } (2 \pi) l/(2 \pi)+ (\tilde{k}+\bar{k}_t)\text{ mod } (k_{max}) l$. In this setting we can block sample $(\{\lambda,\delta,\xi\},\tilde{k})$ by first sampling from  $f_{\delta, \xi,\tilde{k}|\bar{\mathbf{k}},\boldsymbol{\theta}}(\delta, \xi,\tilde{k}|\bar{\mathbf{k}},\boldsymbol{\theta}) $ and then from the full conditional of $\lambda$.  Remark that we can direct sample from $f_{\delta, \xi,\tilde{k}|\bar{\mathbf{k}},\boldsymbol{\theta}}(\delta, \xi,\tilde{k}|\bar{\mathbf{k}},\boldsymbol{\theta}) $ only because $\tilde{k}$ has a limited domain. The functional form of $f_{\delta, \xi,\tilde{k}|\bar{\mathbf{k}},\boldsymbol{\theta}}(\delta, \xi,\tilde{k}|\bar{\mathbf{k}},\boldsymbol{\theta})$ is the same as in \eqref{eq:ful1} but now it must be seen as a function of $(\delta, \xi,\tilde{k})$ and it has to be evaluated for all the $2lk_{max}$ possible values of $(\delta, \xi,\tilde{k})$.  As a final remark observe that $\tilde{k}$ and $\bar{\mathbf{k}}$ are not identifiable but we can always transform them back to the associated $k_t$s that are identifiable.
%
%
 The full conditional of $\lambda$ is $G(\sum_{t=1}^T+v, T+m)I(0,\lambda_{\max})$. The latent variable $\bar{k}_t$ can be updated using a Gibbs step since its full conditional is proportional to  
$
\frac{\lambda^{(\delta \theta_t- \xi   ) \text{ mod } (2 \pi) l/(2 \pi)+ (\tilde{k}+\bar{k}_t)\text{ mod } (k_{max}) l }e^{-\lambda}}{(\delta \theta_t- \xi   )\text{ mod } (2 \pi) l/(2 \pi)+ (\tilde{k}+\bar{k}_t)\text{ mod } (k_{max})l )!}
$
and has to be evaluated only for a finite set of  values.

\section{ {Numerical e}xamples}

\subsection{ {Artificial data}} \label{sec:smalsim}
\begin{figure}[t!]
	\centering
	{\subfloat[]{\includegraphics[trim=80 55 55 55,clip,scale=0.45]{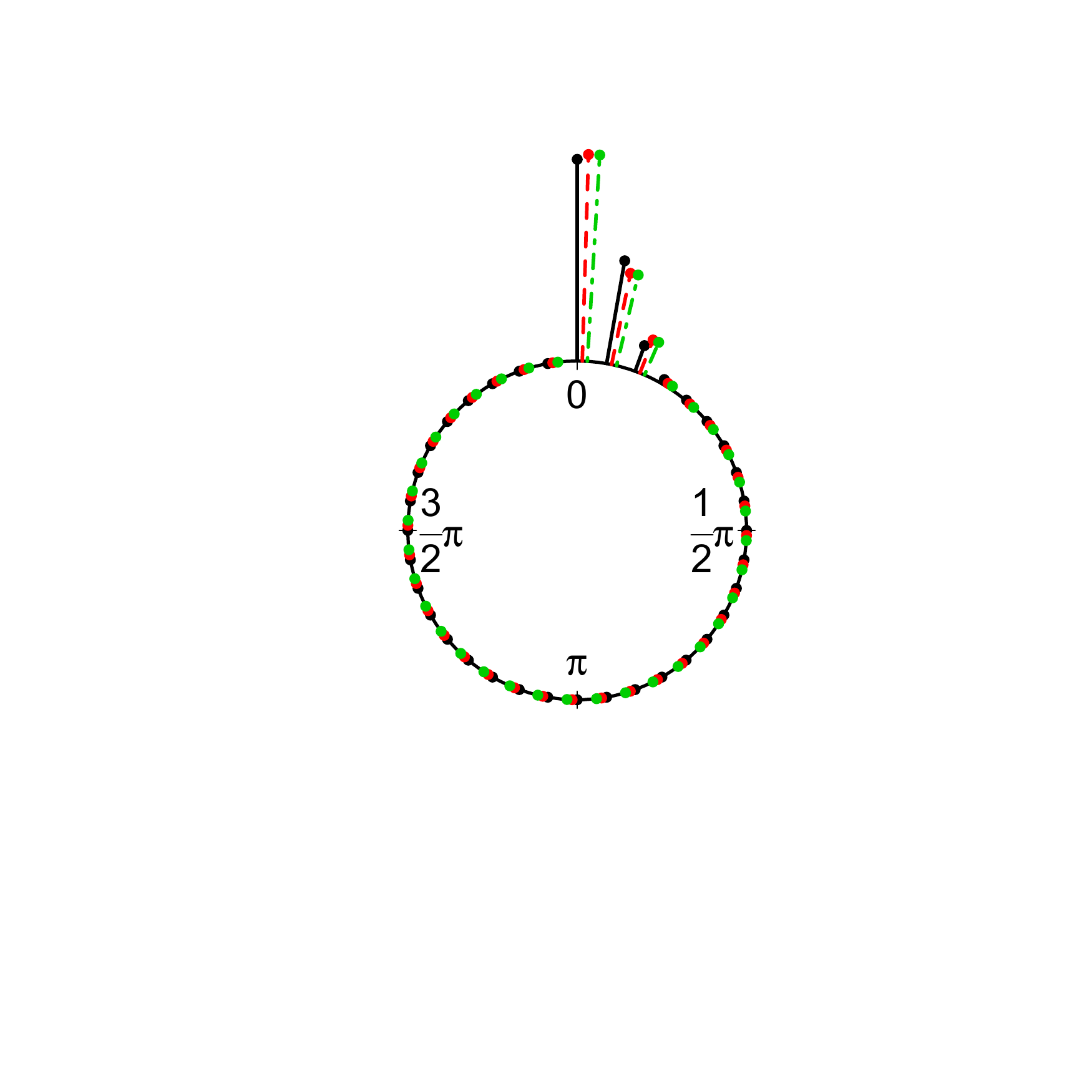}}}
	{\subfloat[]{\includegraphics[trim=80 55 55 55,clip,scale=0.45]{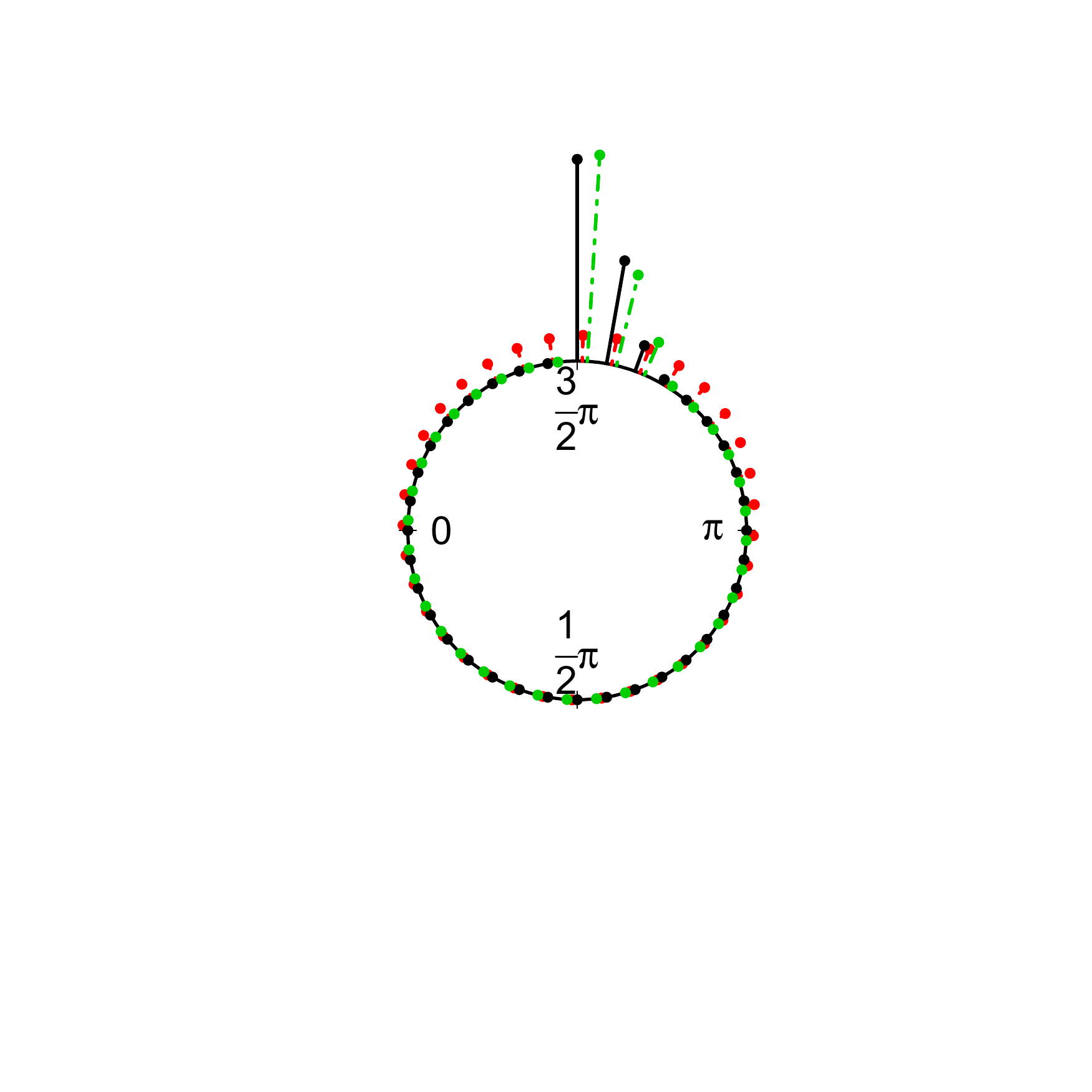}}}
\caption{Circular barplot of simulated (solid), posterior from WP (dashed) and posterior from IWP (dash-dots) values (a) example 1, (b) example 2:  same data as in (a) but with different reference system } \label{fig:denssim}
\end{figure}


%
%

By means of a simulated example, we show the consequences of the lack of ICO and ICID properties on  {model} inference.  {Let us} simulate data from a Wrapped Poisson with parameter $\lambda$, i.e. an IWP with $\delta=1$ and $\xi=0$. Then  we estimate on one hand the posterior distribution of $\lambda$ using the WP and on the other hand the set of parameters $\{\lambda, \delta,\xi\}$ using the IWP.  We compute the posterior estimate of the mean direction, circular concentration and the  predictive density. Then we modify the orthogonal system and we show how the posterior estimates change. \\
We simulate $T=$200 realizations from  a wrapped Poisson  with 36 values over the unit circle and parameter $\lambda=0.5$  ($\Theta \sim WP_{36}(\lambda)$) and we choose a weakly informative prior for $\lambda$ i.e. $\lambda \sim G(1,0.0005)I(0,500)$ that is roughly uniform between $[0,500]$. A circular barplot of the simulated data is given in Figure  \ref{fig:denssim} (a).  
Using Equations \eqref{eq:circmean} and \eqref{eq:circconc} we can compute the true directional mean and concentration that are 0.087 and 0.992.\\ We estimate the model using the approximated MCMC with 10000 iteration,   burnin 4000 and thinning 2, i.e.
we estimated the posterior distribution using 3000 samples.  Under a WP model the posterior mean estimate  of $\lambda$ is $0.519$, with 95\% credible interval (CI)   $[0.423, 0.627]$ and the posterior mean direction and concentration are  0.090 (CI [0.073, 0.109]\footnote{For the circular variable the 95\% CI is computed as the shorter arc that contains the 95\% of posterior samples }) and  0.992 (CI [0.991, 0.994]). The posterior predictive distribution, reported in Figure \ref{fig:denssim} (a), is close to the barplot of Figure  \ref{fig:denssim} (a).  Under the IWP model we obtain very similar results with $\lambda=0.519$ (CI $[0.423,0.627 ]$), while $\delta=1$ and $\xi=0$ along all simulations. The circular mean is 0.090 (CI [0.073, 0.109]) and the concentration 0.992 (CI [0.991, 0.994]).
\\
We then changed the orientation and the initial direction of the reference system. In Figure \ref{fig:denssim}  (b) the data are depicted in the new system. Data are obtained as $\Theta^*= (-(\Theta+\pi/2)) \text{  mod } 2 \pi$. Note that $\Theta^*$ can be seen as a draw from a $IWP_{36}(0.5,-1,\pi/2)$ with  directional mean  4.626 and unchanged circular concentration (0.992). Under the WP model the posterior mean estimate of $\lambda$ is  26.483 with  95\% CI  $[25.762, 27.200]$. The posterior directional mean is 4.599 (CI [4.474, 4.723]), fairly close to the ``true'' value, but the circular concentration is 0.669 (CI [0.662, 0.676]), that is seriously underestimated and then the predictive density, the dashed line in Figure \ref{fig:denssim}  (b), is less concentrated and starts to resemble the circular uniform.  Under the IWP model the estimate of $\lambda$   as well as its credible interval does not change with respect to the first example. Estimates of $\delta$ and $\xi$ are -1 and $\frac{\pi}{2}$ respectively in all simulations. The circular mean is 4.622 (CI [4.603, 4.639]) and concentration is 0.992 (CI [0.991, 0.994]).  Figure \ref{fig:denssim} (b) illustrate these results.

\subsection{Wind direction}

\begin{figure}[t!]
	\centering
	{\subfloat[]{\includegraphics[trim=80 55 55 55,clip,scale=0.48]{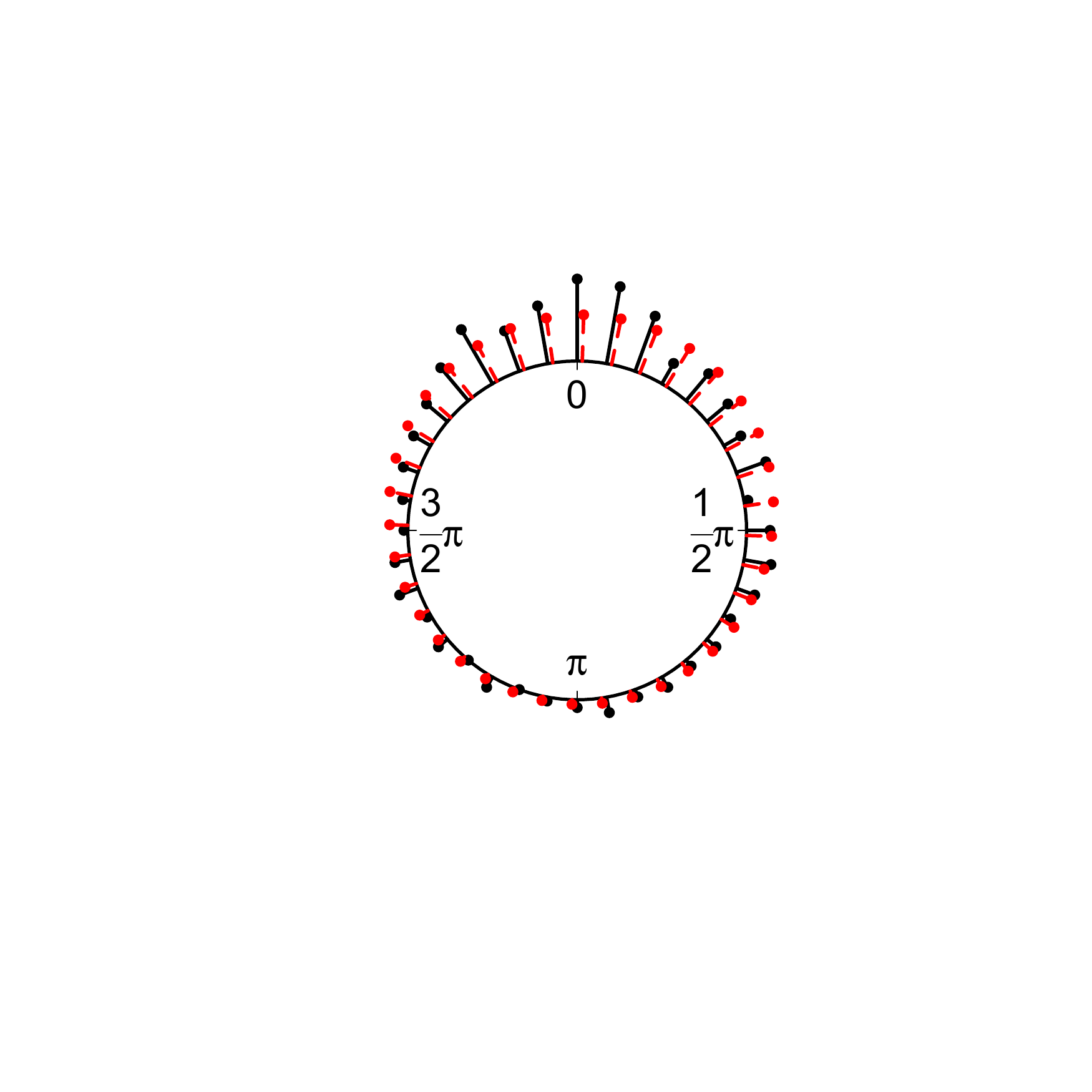}}}
	{\subfloat[]{\includegraphics[trim=80 55 55 55,clip,scale=0.48]{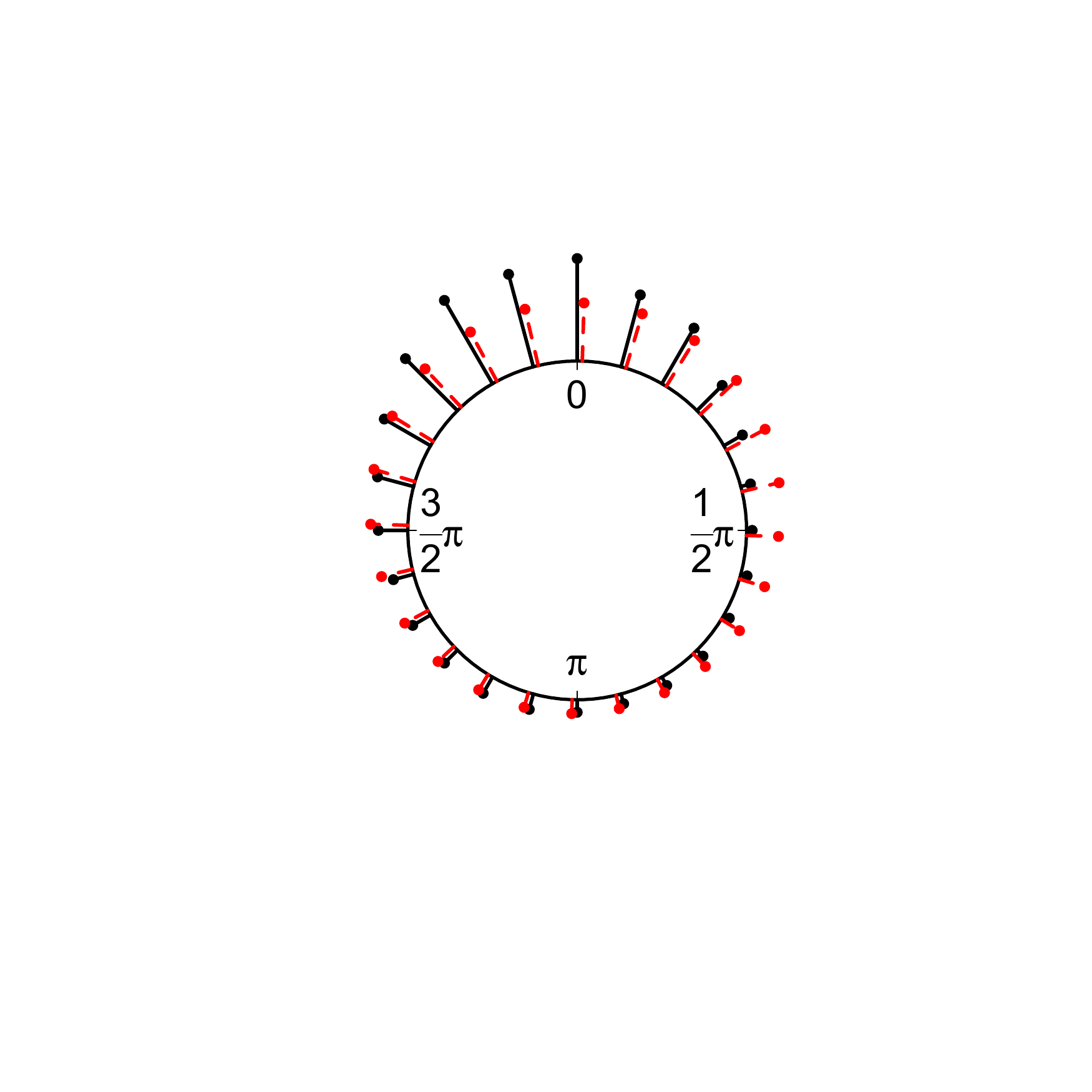}}} 
	\caption{Circular barplot of observed (solid) and predicted (dashed) values (a)  wind data  and (b) gun crime data } \label{fig:Wind}
\end{figure}

 {This section describes data on wind directions recorded on January 2000 at the monitoring station of Capo Palinuro. The monitoring station of Capo Palinuro (WMO code 16310) is one of the coastal stations managed by the Meteorological Service of the Italian Air Force. The station is located on the rocky cape of Capo Palinuro, in the town of Centola in the province of Salerno, South Italy. The cape goes for two kilometers in the Tyrrhenian Sea, between the Velia gulf and the Policastro gulf. Its geographical coordinates are 40$^\circ$01'31.06''N �15$^\circ$16'50.71''E GW and the station elevation is 185 meters above sea level. To the east of the station there are mountains following a North West� South East direction. Accordingly, it is plausible that the orography affects the observed measurements, leading to a unimodal distribution indicating a prevalent direction. }

 {Wind directions are monitored and routinely collected by several environmental agencies. 
Analyzed data come from reports prepared at the station and  provided by the National Center of Aeronautical Meteorology and Climatology (C.N.M.C.A.), special office of the Meteorological Service of the Italian Air Force.
The  database includes date and time of registration, direction of the wind in degrees, with eight daily measurements (every three hours), i.e. we have 240 observations. The measuring instrument, anemometer, is placed away from obstacles and at a height of 10 meters above ground. A relevant issue with this recordings is that the measurement instrument asses wind directions on a discrete scale dividing the circle into ten-degree intervals ($l=36$). Accordingly, the need for a discrete circular distribution to proper model these data. The barplot of the available data is provided in Figure \ref{fig:Wind}(a).}

To get parameters estimates, we use the  approximated MCMC illustrated in section \ref{sec:bayest} with 10000 iterations, burnin 4000 and thinning 2.\\
The marginal posterior distribution of $\delta$  returns a probability value of 0.23 for $\delta= -1$ and of 0.77  for $\delta= 1$ suggesting a negative  skewness $s$, since, as we noted in Section \ref{sec:pois}, with $l \geq 20$ the sign of the skewness depends only on $\delta$.  The correspondent posterior distribution of  $s$ is bimodal with one mode on the negative axis and  a smaller one on the positive one;  the posterior mean of the skewness is -0.192, the 95\% CI is    $[-0.198, 0.196]$ ( [-0.198, -0.187] conditioning to $\delta=1$ and [0.184, 0.197] conditioning to $\delta=-1$), suggesting that the predictive distribution is almost symmetric, see the predictive distribution in Figure \ref{fig:Wind}(a).  The posterior   95\% CI of $\lambda$ is [47.809,  59.442] with posterior mean  51.744 that is far away from 0 and then we expect a fairly dispersed predictive distribution. Indeed the posterior distribution of the circular concentration has mean value 0.456    and CI [0.405, 0.484] (recall that $c_{max}=1$). The posterior distribution of $\xi$ is negatively skewed  assigning zero probability below $\xi=12 \frac{2 \pi}{36}$ and above $\xi=25 \frac{2 \pi}{36}$,  its  modal value is $24 \frac{2 \pi}{36}$ reached with probability 0.123.  The posterior mean estimates of the directional mean is 0.164 (closed to the geographical North), with CI [6.248, 0.366].  To evaluate the goodness of fit of the model we use the average prediction error proposed in \citet{Jona2013} that is computed  as
$\frac{1}{B} \sum_{b=1}^B  \sum_{t=1}^n (1-\cos(\theta_t-\theta_t^b))$, where $B$ is the number of posterior samples, $T$ is the number of observations used to estimate the model, $\theta_t$ is the $t^{th}$ observation and $\theta_t^b$ is a posterior simulation of the  $t^{th}$ observation using as parameters the ones of the  $b^{th}$ posterior sample; in this example we obtain a value 0.780, indicating a reasonable fit given the large variability of the observations (maximum value for the ape is 2).

\subsection{Gun crime data}\label{sec:gun}


In a number of areas of police resource management, analysis of the relative frequency
of events occurring at different times of day plays an important role. Identifying times of
day when certain kinds of crime or disorder are more likely to occur offers a first step
towards prioritising resources and improved targeting.

The data set described in this section relates to reports of crimes committed with a gun in  Pittsburgh, Pennsylvania,  recorded over the period 1987-1998 \citep{cohen2006,Gill2010}. To avoid under- or over-reporting related to the time lag between the occurrence of the crime and the time of its reporting, a discrete scale is considered, i.e. data are collected on hourly basis.
We can think of the data as circular discrete data with 24 points over the unit circle. The barplot of the data is shown in Figure \ref{fig:Wind} (b), where the initial direction is midnight, and the sense of orientation is clockwise. The data contains information on 15831 crimes. An initial inspection of the data reveals an unimodal distribution, where frequency of reporting times peaks during the night, and falls off during the day.

The posterior  directional mean is 6.166 with 95\%  CI [6.138,  6.193] that on the 24 hours scale, corresponds  to 23:33 ([23:27,23:39]), close to midnight.   The posterior estimate of  $\lambda$ is  33.834 (CI [33.727, 33.941]) implying a large variability of the data ($\lambda>>0$), confirmed by the estimate of the circular concentration 0.316 with 95\% CI [0.315, 0.317] .  $\delta=-1$  with probability 1 and it follows that the posterior distribution of the skewness is non-zero over the positive axis, with a point estimate of 0.236 and 95\% CI [0.235, 0.236] . Again  the value of the skewness is small suggesting an almost symmetric  predictive density,  see Figure \ref{fig:Wind} (b), The posterior of $\xi$ is concentrated on $15 \frac{2 \pi}{24}$. In this example the average prediction error is 0.838 again a reasonable fit given the observed variability.

\section{Summary and concluding remarks}
In this paper we developed a method to build circular distribution invariant for changes in the reference system. In particular we discuss and analyze in details a new discrete circular distribution that helps the modeling of several phenomena. Simulated examples illustrated how misleading can be the use of not invariant distributions. Real world examples show that from the description of natural processes to the investigation of social phenomena using discrete circular data we  can benefit from an appropriate statistical modeling.\\ Future work will be focused on classification problems using hidden markov models based on discrete invariant circular distribution being the wind data our motivating example.  Further developments will involve the study of these type of distributions with dependent (in both time and space) phenomena.

\bibliography{all}
\end{document}